\newcommand{\R}{{\mathbb R}}
\newcommand{\iid}{{\stackrel{i.i.d.}{\sim}}}
\DeclareMathOperator*{\argmax}{arg\,max}
\DeclareMathOperator*{\argmin}{arg\,min}
\newtheorem{theorem}{Theorem}[section]
\newtheorem{lemma}[theorem]{Lemma}
\newtheorem{proposition}[theorem]{Proposition}
\newtheorem{observation}[theorem]{Observation}
\newtheorem{remark}{Remark}
\newcommand{\blind}{1}
\begin{document}

\def\spacingset#1{\renewcommand{\baselinestretch}%
{#1}\small\normalsize} \spacingset{1}


\if1\blind
{
  \title{\bf Central Limit Theorems for Classical Multidimensional Scaling}
  \author{Gongkai Li\thanks{
    This work is partially supported by \textit{DARPA D3M through contract FA8750-17-2-0112} } , \hspace{.2cm}
    Minh Tang, \hspace{.2cm} Nicolas Charon,\hspace{.2cm} Carey E. Priebe\\
    Johns Hopkins University, Department of Applied Math and Statistics\hspace{0.2cm}\\
    }    
  \maketitle
} \fi

\if0\blind
{
  \bigskip
  \bigskip
  \bigskip
  \begin{center}
    {\LARGE\bf Title}
\end{center}
  \medskip
} \fi

\begin{abstract}
Classical multidimensional scaling is a widely used method in dimensionality reduction and manifold learning. 
The method takes in a dissimilarity matrix and outputs a low-dimensional configuration matrix based on a spectral decomposition. 
In this paper, we present three noise models and analyze the resulting configuration matrices, or embeddings. 
In particular, we show that under each of the three noise models the resulting embedding gives rise to a central limit theorem. We also provide compelling simulations and real data illustrations of these central limit theorems. This perturbation analysis represents a significant advancement over previous results regarding classical multidimensional scaling behavior under randomness.
\end{abstract}

\noindent%
{\it Keywords:} classical multidimensional scaling, dissimilarity matrix, error model, perturbation analysis, central limit theorem.

\section{Background and Overview}
\label{sec:B&O}

Inference based on dissimilarities is of fundamental importance in statistics, data mining and machine learning \citep{dissimilarityPatternRecog}, with applications ranging from neuroscience \citep{Vogelstein386} to psychology \citep{Carroll1970} and economics \citep{cmdsecon}. In each of these fields, rather than directly observing the feature values of the objects, often we observe only  the dissimilarities or ``distances" between pairs of objects (inter-point distances). A common approach to dimensionality reduction and subsequent inference problems involving dissimilarities is to embed the observed distances into some (usually Euclidean) space to recover a configuration that faithfully preserves observed distances, and then proceed to perform inference based on the resulting configuration \citep{Leeuw-Heiser, BGbook, Torgerson, Cox2008}. The popular classical multidimensional scaling 
(CMDS) dimensionality reduction method provides an example of such an embedding scheme into Euclidean space, in which we have readily available tools to perform statistical inference. Furthermore, CMDS also forms the basis for several other more recent approaches to nonlinear dimension reduction and manifold learning \citep{Scholkopf, ChenBuja}, such as Isomap \citep{Isomap} and Random Forest manifold learning \citep{CriminisiandShotton} among others. 

Although widely used, the behavior of CMDS under randomness remains largely unexplored. Several recent papers have highlighted this omission. \citet{DsquaredplusE} write ``Despite the popularity of multi-dimensional scaling, very little is known about to what extent the distances between the embedded points could faithfully reflect the true pairwise distances when observed with noise."; \citet{Fan} write ``[W]e are not aware of any statistical results measuring the performance of MDS under randomness, such as perturbation analysis when the objects are sampled from a probabilistic model." and \citet{Peterfreund&Gavish} write ``To the best of our knowledge, the literature does not offer a systematic treatment on the influence of ambient noise on MDS embedding quality." This paper addresses this acknowledged gap in the literature.

\subsection{Review of Classical Multidimensional Scaling}
\label{sec:cmds}
Given an $n \times n$ hollow symmetric dissimilarity matrix $D$, and an embedding dimension $d$, we seek $X \in \mathbb{R}^{n \times d}$, where the rows $X_1, X_2, \dots, X_n \in \mathbb{R}^d$ of $X$ represent coordinates of points in $\R^d$, such that the overall inter-point distances between $X_i$ and $X_j$ are ``as close as possible" to the distances given by the dissimilarity matrix $D$. More specifically, CMDS involves the following steps:
\begin{enumerate}
  \item Compute the matrix $B = -\frac{1}{2}P D^2 P$, where $D^2$ is $D$ matrix entry-wise squared, and $P = I - \frac{\bm{1}\bm{1}^\top}{n}$ is the double centering matrix. Here $I$ denotes the $n \times n$ identity matrix and $\bm{1} = (1, \dots, 1)^\top \in \mathbb{R}^n$.
  \item Extract the $d$ largest positive eigenvalues $s_1, \dots, s_d$ of $B$ and the corresponding eigenvectors ${u}_1, \dots, u_d$.
  \item Let $X = U_B S_B^{1/2} \in \mathbb{R}^{n \times d}$, where $U_B = (u_1, \dots, u_d)$ and $S_B = \textrm{diag}(s_1, \dots, s_{d})$. Each row of $X$ represents the coordinate of a point in $\R^d$.
\end{enumerate}
In essence, CMDS minimizes the Strain loss function defined as $L(X) := \| XX^{\top} - B \|_{F}$ where $\|\cdot\|_{F}$ denote the Frobenius norm of a matrix. Furthermore, the resulting configuration ${X}$ centers all points around the origin, resulting in an inherent issue of identifiability: $X$ is unique only up to an orthogonal transformation. In the following presentation, we will write $X = U_B S_B^{1/2} W$ where $W$ is some orthogonal matrix, for a suitably transformed $X$. 

\section{Noise Model and Embedding}
\label{sec:M&E}
In this section, we propose three different but related noise models for the matrix of observed dissimilarities.  
Suppose we have inter-point distances of $n$ points in $\mathbb{R}^d$, and the resulting distance matrix is given by $D \in \R^{n \times n}$, i.e. $D_{ij} = \|x_i - x_j\|_{2}$. Let $D^2$ denote the entry-wise square of $D$ and  $\Delta$ be the  dissimilarity matrix we observed (such as measured via a scientific experiment). We consider three error models for $\Delta$:

\subsection{Model 1: $\Delta^2 = D^2 + E$}
\label{D^2+E}
An error model proposed in \cite{DsquaredplusE} for $\Delta$ is $\Delta^2 = D^2 + E$,  where we can think of $D^2$ as the ``signal'' matrix and $E$ as the ``noise''. We shall assume that $E$ satisfies the following conditions:
\begin{enumerate}[label=(\roman*)]
  \item $\mathbb{E}[E] = 0$, hence $\mathbb{E}[\Delta^2] = D^2$.
  \item $E$ is hollow and symmetric.
  \item Entries $E_{ij}$ are independent and $\mathrm{Var}(E_{ij}) = \sigma^2$.
  \item Each $E_{ij}$ follows a sub-Gaussian distribution.
\end{enumerate}

\subsection{Model 2: $\Delta = D + E$}
\label{D+E}
Another realistic error model is $\Delta = D + E$. 
Here we also require that the random matrix $E$ satisfies conditions (i) to (iv) in section \ref{D^2+E} along with a constant third and fourth moment conditions, i.e., (v) $\mathbb{E} [E_{ij} ^ 3] \equiv \gamma$ and $\mathbb{E} [E_{ij} ^ 4] \equiv \xi$ for all $i, j$.

\subsection{Model 3: Matrix Completion}
\label{matrix_completion}
In \cite{Chatterjee}, the author developed the connection between the true distance matrix and the distance matrix with missing entries for a general metric. Restricting our attention to the Euclidean distance, we propose the following matrix completion model:
\\
Suppose with probability $q$ we observe $\Delta_{ij} = D_{ij}$ and with probability $1 - q$, $\Delta_{ij}$ is missing (in which case we set $\Delta_{ij} = 0$). Our model becomes $\Delta = D + E$ where $E_{ij}$ is a Bernoulli random variable which takes value $-D_{ij}$ with probability $1-q$ and takes value $0$ with probability $q$. It is easy to see that $\mathbb{E}[\Delta] = q \cdot D$ and $\mathbb{E}[\Delta^2] = q \cdot D^2$. 

For each of the above noise models, we apply CMDS to $\Delta$ to get the resulting configuration matrix $\hat{X}$, and use the following notations for this procedure:
\begin{enumerate}
\item Let $\hat{B} = -\frac{1}{2} P \Delta^2 P$.
\item Let $S_{\hat{B}} \in \mathbb{R}^{d \times d}$ be the diagonal matrix of $d$ largest eigenvalues of $\hat{B}$ and $U_{\hat{B}} \in \mathbb{R}^{n \times d}$ be the matrix whose orthogonal columns are the corresponding eigenvectors. 
\item The matrix $\hat{X} = U_{\hat{B}} S_{\hat{B}}^{1/2} \in \R^{n \times d}$ is the ``embedding of $\Delta$" into $\mathbb{R}^d$.  
\end{enumerate}
A natural question arises regarding  how the added noise affects the embedding configuration. That is, what is the relationship between the embedding $X$ from $D$ as in Section \ref{sec:cmds} and the embedding $\hat{X}$ from $\Delta$?

\subsection{Related Works}
\label{RW}
The problem of recovering an Euclidean distance matrix from noisy or imperfect observations of pairwise dissimilarity scores arises naturally in many different contexts. For example, in \cite{DsquaredplusE}, the authors proposed the model $\Delta^2 = D^2 + E$ and showed that there exists an estimator 
$$\hat{D}^{2}:= \argmax_{M \in \mathcal{D}^{(2)}_n} \Bigl\{ \frac{1}{2} \|\Delta^2 - M\|_{F}^2 + \lambda_n \mathrm{trace} \,\,(-\frac{1}{2}PMP) \Bigr \}$$ 
for $D^2$. Here $\mathcal{D}^2_n$ is the set of $n \times n$ {\em squared} Euclidean distance matrix and $\lambda_n$ is a tuning parameter. In particular, Corollary 6 in \cite{DsquaredplusE} states that under suitable model on E, with probability approaching to one we have 
$$\|\hat{D}^2 - D^2 \|_F^2 \leq 36n\sigma^2(r+1)$$
where $\sigma$ is the variance of the noise and $r$ is the rank of $D^{2}$. In this paper we can get, as a corollary of ours results, a bound of the same order on $\|\hat{D}^2 - D^2\|_{F}^2$. Furthermore, our central limit theorem on the configuration matrix $X$ is a more refined limiting result of a different flavor.

On the other hand, completing a distance matrix with missing entries has been a popular problem in the engineering and social sciences; see, for example, \cite{Alfakih1999,Bakonyi,Singer9507, Spence1974} and distance matrix completion is closely related to multidimensional scaling \cite{BGbook,Chatterjee, Javanmard2013, Montanari}. Especially noteworthy is Theorem~2.5 of \cite{Chatterjee}, where the author established an upper bound for the mean squared error on the estimator $\tilde{M}$ for a general distance matrix $M$. More specifically, let $(K,d)$ be a compact metric space and $x_1, x_2, \dots, x_n$ be $n$ arbitrary points in $K$. Let $M$ be the $n \times n$ matrix whose $ij$-entry is $d(x_i,x_j)$. Let $\epsilon > 0$  be such that $q \geq n^{-1 + \epsilon}$. For a given $\delta > 0$, let $N(\delta)$ be the covering number of $K$ using balls of radius $\epsilon$ with respect to the metric $d$. Then there exists an estimator $\tilde{M}$ obtained by truncating the singular value decomposition of $M$ such that
$$ \mathrm{MSE}(\tilde{M}) \leq C \inf_{\delta > 0} \min \Bigl\{  \frac{\delta + \sqrt{ N(\delta/4) /n}}{\sqrt{q}} , 1 \Bigr \} + C(\epsilon) e^{-ncq}$$
where $c$ and $C$ are constants depending on the truncation level $\eta$ for 
the singular values of $M$ and $C(\epsilon)$ is a constant depending only on $\epsilon$ and $\eta$.
Of particular interest is the application of this theorem to the Euclidean distance matrix, for which we obtain roughly
$$\textrm{MSE} (\tilde{M}) \leq \frac{Cn^{-1/3}}{\sqrt{q}}.$$ 
Another relatively new and slightly different result on the CMDS configuration matrix $X$ on the incomplete Euclidean distance matrix is given in \cite{Taghizadeh}, in which Theorem 1 states that with high probability, we have 
$$\|\hat{X} - X\|_F \leq \mathcal{O}(\frac{\sqrt{n}}{\sqrt{q}}).$$ 
Our central limit theorem in this paper improves upon both result.
In addition, the Euclidean distance matrix completion problem can also be viewed from an optimization point of view. See \cite{Tasissa2018ExactRO} for a review of such approaches.  

\section{Main Results}
\label{main}

Recall that a random variable $X$ is sub-Gaussian if $\mathbb{P}[ | X | > t] \leq 2 e^{ -\frac{t^2}{ {K}^2} }$ for some constant $K$ and for all $t \geq 0$.  Associated with a sub-Gaussian random variable is a Orlicz norm defined as $ \| X \|_{\psi_2} = \inf \{ t >0 : \mathbb{E} \exp(\frac{X^2 }{t^2}) \leq 2 \}$. 
A random vector $X$ in $\R^n$ is called sub-Gaussian if the one-dimensional marginals $\big \langle X, x \big \rangle$ are sub-Gaussian random variables for all $x \in \R^n$, and the corresponding sub-Gaussian norm of $X$ is defined as $\|X\|_{\psi_2} = \sup\limits_{x \in S^{n-1}} \| \big \langle X, x \big \rangle \|_{\psi_2}$.

\subsection{Main Theorems}
We now present central limit theorems for the rows of the CMDS configuration $\hat{X}$ for the three noise models in \S~\ref{sec:M&E}. Intuitively speaking, the theorems established that the rows of $\hat{X}$, after some orthogonal transformation, is approximately normally distributed around the rows of $X$. Furthermore, the covariance matrix will depend on the noise model and the true distribution of the points in the underlying space and are substantially different between the three noise models considered. In particular, the covariance matrix for the noise model $\Delta^2 = D^2 + E$ in Theorem~\ref{main_theorem_D^2+E} depends only on the variance $\sigma^2$ of the noise $E_{ij}$.
This is in contrast with the covariance matrices of the model $\Delta = D + E$ and the model $\mathbb{E}[\Delta] = q D$ in Theorem~\ref{main_theorem_D+E} and Theorem~\ref{main_theorem_matrix_completion}, both of which depend also on the underlying true distances $D_{ij}$. The machinery involved in proving these results are by and large the same and we refer the reader to the Appendix for detailed proofs. Finally, for ease of exposition, we denote by $(A)_i$ the $i$-th row of a matrix.

\begin{theorem}
\label{main_theorem_D^2+E} (Central Limit Theorem for CMDS of $\Delta^2 = D^2 + E$)\\
Let $Z_1, Z_2, \dots, Z_n \iid F$ for some sub-Gaussian distribution $F$ on $\R^d$. Let $D$ be the Euclidean distance matrix generated by the $Z_k$'s, i.e. $D_{ij} = \|Z_i - Z_j\|$, and suppose that $\max\limits_{1 \leq i \leq n} \sum\limits_{j=1}^{n} D_{ij}^{2} \gg \log^{4}{n}.$ Let $\Delta^2 = D^2 + E$ where the noise matrix $E$ satisfy the conditions in Section \ref{D^2+E}, i.e, (i) $\mathbb{E}[E] = \bm{0}$, (ii) $E$ is hollow and symmetric, 
(iii) the entries $E_{ij}$ are independent for $i \leq j$ with $\mathrm{Var}[E_{ij}] \equiv \sigma^2$, and (iv) each $E_{ij}$ follows a sub-Gaussian distribution. We emphasize that the $E_{ij}$ need not be identically distributed. Denote by $\hat{X}_n$ the CMDS embedding configurations of $\Delta$ into $\mathbb{R}^{d}$. Then there exists a sequence of $d \times d$ orthogonal matrices $\{W_n\}_{n=1}^{\infty}$ such that for any $\alpha \in \mathbb{R}^{d}$ and any fixed row index $i$, we have 
  $$ \lim_{n {\to} \infty} \mathbb{P} \{\sqrt{n} [(\hat{X}_n W_n)_i - (Z_i - \bar{Z}) ]\leq \alpha\} = \Phi(\alpha, \Sigma) $$
  where $\bar{Z}$ is the mean of $Z_k$'s and $\Phi(\alpha, \Sigma)$ denotes the CDF of a multivariate Gaussian with mean $0$ and covariance matrix $\Sigma$, evaluated at $\alpha$. 
  Here $\Sigma = \frac{\sigma^2}{4} {\Xi}^{-1}$ where $\Xi =  \mathrm{Cov}(Z_k) \in \R^{d \times d}$.
\end{theorem}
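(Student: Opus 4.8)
The plan is to exploit the exact algebraic structure of CMDS applied to a Euclidean squared-distance matrix and then treat the problem as a spectral perturbation of a low-rank signal by a random symmetric matrix, following the linearization strategy familiar from the random-dot-product-graph CLT literature. The foundational identity to record first is that double centering of the squared Euclidean distance matrix recovers the centered Gram matrix: writing $\tilde Z_i = Z_i - \bar Z$ for the centered points and $\tilde Z \in \R^{n\times d}$ for the matrix with these rows, one has $B = -\tfrac12 P D^2 P = \tilde Z \tilde Z^\top$. This is crucial because it shows the population object $B$ is exactly rank $d$ and that the noiseless embedding $X$ coincides with $\tilde Z$ up to an orthogonal transformation. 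Since $\hat B = -\tfrac12 P\Delta^2 P = B - \tfrac12 P E P$, the perturbation is $\tilde E := -\tfrac12 PEP$, and because $\bm{1}^\top \tilde Z = 0$ we get the simplification $\tilde E \tilde Z = -\tfrac12 PE\tilde Z = -\tfrac12\big(E\tilde Z - \tfrac1n \bm{1}\bm{1}^\top E\tilde Z\big)$.

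Second, I would establish the spectral scale separation that makes a CLT possible. By the strong law of large numbers $\tfrac1n \tilde Z^\top \tilde Z \to \Xi$, so the nonzero eigenvalues of $B$ are all of order $n$. Meanwhile $E$ is a symmetric matrix with independent, mean-zero, sub-Gaussian entries, so standard bounds give $\|\tilde E\| \le \|E\| = O_P(\sigma\sqrt n)$. The eigengap of order $n$ dominates the noise of order $\sqrt n$, which is exactly the regime in which Davis--Kahan and Weyl's inequality yield tight control; the hypothesis $\max_i \sum_j D_{ij}^2 \gg \log^4 n$ is what I would use to upgrade these operator-norm bounds to the row-wise ($2\to\infty$) concentration needed below.

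Third---and this is where the real work lies---I would derive the row-wise linearization. The aim is to produce a sequence of orthogonal matrices $W_n$ (obtained by aligning the invariant subspace of $\hat B$ to that of $B$ via the Procrustes solution) for which
$$ \sqrt n\big[(\hat X_n W_n)_i - \tilde Z_i\big] = -\tfrac12\, \tfrac{1}{\sqrt n}\Big(\sum_{j=1}^n E_{ij}\tilde Z_j\Big)\Xi^{-1} + o_P(1). $$
The standard perturbation expansion gives $\hat X_n W_n - \tilde Z \approx \tilde E\,\tilde Z(\tilde Z^\top\tilde Z)^{-1}$, and substituting the simplification above, replacing $(\tilde Z^\top\tilde Z)^{-1}$ by $\tfrac1n\Xi^{-1}$, and discarding the common centering correction $\tfrac1n\bm{1}\bm{1}^\top E\tilde Z$ should leave exactly the displayed leading term. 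I expect the main obstacle to be bookkeeping the remainder: one must show that the quadratic term in the perturbation series, the error from the subspace rotation $W_n$, and the two substitution errors are each $o_P(1)$ after multiplication by $\sqrt n$. This is precisely where the row-wise bounds from the previous step, together with the scale separation, are indispensable.

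Finally, I would prove the CLT for the surviving linear term by conditioning on $Z_1,\dots,Z_n$. Conditionally, $\tfrac{1}{\sqrt n}\sum_j E_{ij}\tilde Z_j$ is a normalized sum of independent mean-zero random vectors, the $j$-th having conditional covariance $\sigma^2\, \tilde Z_j \tilde Z_j^\top$, whose Lindeberg condition follows from the sub-Gaussian tails and the moment bounds on $\tilde Z$. The multivariate Lindeberg--Feller theorem then gives a conditional $N(0,\sigma^2\Xi)$ limit, since $\tfrac{\sigma^2}{n}\sum_j \tilde Z_j\tilde Z_j^\top \to \sigma^2\Xi$ almost surely. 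Because this limit does not depend on the conditioning variables, it is also the unconditional limit; propagating it through the constant factor $-\tfrac12\Xi^{-1}$ yields the covariance $\tfrac14\Xi^{-1}(\sigma^2\Xi)\Xi^{-1} = \tfrac{\sigma^2}{4}\Xi^{-1} = \Sigma$, which completes the argument.
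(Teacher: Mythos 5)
Your proposal follows essentially the same route as the paper: identify $B=-\tfrac12 PD^2P$ with the centered Gram matrix $\tilde Z\tilde Z^\top$, isolate the leading perturbation term $(\hat B-B)U_BS_B^{-1/2}$ (equivalently $\tilde E\,\tilde Z(\tilde Z^\top\tilde Z)^{-1}$), control the residual terms via operator-norm bounds on $E$, Weyl, and Davis--Kahan against the order-$n$ eigengap, and finish with Lindeberg--Feller plus Slutsky to get the covariance $\tfrac14\Xi^{-1}(\sigma^2\Xi)\Xi^{-1}=\tfrac{\sigma^2}{4}\Xi^{-1}$. The remainder bookkeeping you defer is exactly the content of the paper's Lemmas A.4--A.9, and the tools you name for it are the ones the paper uses.
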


\begin{remark}
We can relax the common variance requirement (iii) in Theorem~\ref{main_theorem_D^2+E}. Let $\textrm{Var}(E_{ij}) = \sigma_{ij}^2$ and suppose that the collection $(D^{2}_{ij} - \Delta^{2}_{ij})(Z_{j} - \mu_{z})$s satisfy the multivariate Lindeberg-Feller condition. Define $\Sigma_{i} = \frac{1}{n} \sum\limits_{j \neq i} \sigma_{ij}^2 \textrm{Cov}(Z_k)$. We then obtain the following variant of Theorem \ref{main_theorem_D^2+E}:
$$\sqrt{n} \Sigma_{i}^{-\frac{1}{2}} [(\hat{X_n}W_n)_i - (Z_i - \bar{Z}))] \rightarrow \mathcal{N}(0, I)$$
\end{remark}

\begin{theorem} (Central Limit Theorem for CMDS of $\Delta = D + E$)\\
\label{main_theorem_D+E}
Let $Z_1, Z_2, \dots, Z_n \iid F$ for some sub-Gaussian distribution $F$ on $\R^d$. Let $D$ be the Euclidean distance matrix generated by the $Z_k$'s, i.e. $D_{ij} = \|Z_i - Z_j\|$ and suppose that $ \max\limits_{1 \leq i \leq n} \sum\limits_{j=1}^{n} D_{ij}^{2} \gg \log^{4}{n}.$ Let $\Delta = D + E$ 
and suppose that the noise matrix $E$ satisfy, in addition to the conditions in Theorem~\ref{main_theorem_D^2+E}, the condition (v) 
$\mathbb{E}[E_{ij}^3] \equiv \gamma$ and $\mathbb{E}[E_{ij}^4] \equiv \xi$. Denote by $\hat{X}_n$ the CMDS embedding configurations of $\Delta$ into $\R^d$. 
Then there exists a sequence of $d \times d$ orthogonal matrices $\{W_n\}_{n=1}^{\infty}$ such that for any $\alpha \in \mathbb{R}^{d}$ and any fixed row index $i$,  
  $$ \lim_{n {\to} \infty} \mathbb{P} \{\sqrt{n} [(\hat{X}_n W_n)_{i} - (Z_i - \bar{Z})] \leq \alpha\} = \int_{\mathrm{supp}(F)} 
  \Phi(\alpha, \Sigma(\emph{z})) dF(\emph{z})$$
  where $\bar{Z}$ is the mean of $Z_k$'s and $\Phi(\alpha, \Sigma)$ denotes the CDF of a multivariate Gaussian with mean $0$ and covariance matrix $\Sigma$, evaluated at $\alpha$. Here $\Sigma (\emph{z})= {\Xi}^{-1} \widetilde{\Sigma}(\emph{z}) {\Xi}^{-1}$ where $\Xi := \mathrm{Cov}(Z_i) \in \R^{d \times d}$ and, with $\mu_z = \mathbb{E}[Z_i] \in \mathbb{R}^{d}$,
  $$\widetilde{\Sigma} (\emph{z}) := \mathbb{E}_{Z_k}\Bigl[(\sigma^2 \|\emph{z} - Z_k\|^2 + \gamma \| z_i - Z_j \|+ \frac{1}{4} \xi - \frac{\sigma^4}{4}) (Z_k- {\mu}_{z}) (Z_k - {\mu}_{z})^\top\Bigr] $$  is a covariance matrix depending on $\emph{z}$.
\end{theorem}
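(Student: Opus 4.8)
The plan is to follow the same linearization strategy used for Theorem~\ref{main_theorem_D^2+E}, with the extra bookkeeping forced by squaring $\Delta = D + E$. First I would record the classical fact that double-centering a squared Euclidean distance matrix recovers the centered Gram matrix: writing $X$ for the $n \times d$ matrix whose rows are $Z_i - \bar{Z}$, one has $B := -\tfrac{1}{2} P D^2 P = X X^\top$, so that $X$ is, up to the rotation $W_n$, exactly the CMDS embedding of $D$. Expanding entrywise, $\Delta_{ij}^2 = D_{ij}^2 + 2 D_{ij} E_{ij} + E_{ij}^2$, so that
$$\hat{B} - B = -\tfrac{1}{2} P M P =: P \tilde{E} P, \qquad M_{ij} = 2 D_{ij} E_{ij} + E_{ij}^2, \quad \tilde{E}_{ij} = -D_{ij} E_{ij} - \tfrac{1}{2} E_{ij}^2.$$
Thus the signal $B$ is unchanged and all the new phenomena live in $P \tilde{E} P$; the third- and fourth-moment hypotheses (v) will enter only through this term.

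Second, I would establish the first-order expansion
$$\hat{X} W_n - X = (\hat{B} - B)\, X\, (X^\top X)^{-1} + R_n,$$
exactly as for Theorem~\ref{main_theorem_D^2+E}. This is the step where the hypothesis $\max_i \sum_j D_{ij}^2 \gg \log^4 n$ is used: it guarantees that the $d$ nonzero eigenvalues of $B$, which are of order $n$, dominate the operator norm of $P \tilde{E} P$, so that Davis--Kahan and the eigenvalue perturbation bounds apply and a $W_n$ aligning the two leading subspaces exists. The genuinely new ingredient relative to Model~1 is controlling $\|\hat{B} - B\|$ when the noise is weighted by the true distances (the $D_{ij} E_{ij}$ part of $\tilde{E}$) and when the squared part $E_{ij}^2$ has nonzero mean $\sigma^2$ off the diagonal; that mean contributes a deterministic $\tfrac{\sigma^2}{2} P$, which I would show perturbs the spectrum only at $O(1)$ scale and is therefore negligible after the $\sqrt{n}$ scaling.

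Third, having reduced to the leading term, I would use $PX = X$ (the columns of $X$ are centered) to write its $i$-th row as $(\tilde{E} X)_i (X^\top X)^{-1}$ up to a global-average correction from the remaining $P$ that is of lower order. Centering as $E_{ik}^2 = \sigma^2 + (E_{ik}^2 - \sigma^2)$ and using $\sum_{k \ne i}(Z_k - \bar{Z}) = -(Z_i - \bar{Z})$ kills the constant part up to an $O(1)$ bias that vanishes at the $\sqrt{n}$ scale, leaving
$$(\tilde{E} X)_i = -\sum_{k \ne i} \Big[ D_{ik} E_{ik} + \tfrac{1}{2}(E_{ik}^2 - \sigma^2) \Big](Z_k - \bar{Z}) \;+\; o_P(\sqrt{n}).$$
For fixed $i$ these summands are, conditionally on the $Z_k$'s, independent and mean zero, so a multivariate Lindeberg--Feller CLT applies. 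Replacing $(X^\top X)^{-1}$ by $\tfrac{1}{n} \Xi^{-1}$ via the law of large numbers and computing the per-term scalar variance gives $\sigma^2 D_{ik}^2 + \gamma D_{ik} + \tfrac{1}{4}\xi - \tfrac{1}{4}\sigma^4$, whose $Z_k$-average reproduces $\widetilde{\Sigma}(z)$ and hence $\Sigma(z) = \Xi^{-1} \widetilde{\Sigma}(z) \Xi^{-1}$ with $z = Z_i$.

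Finally, because the limiting covariance depends on $Z_i$ through $D_{ik} = \|Z_i - Z_k\|$, conditioning on $Z_i = z$ yields a Gaussian with covariance $\Sigma(z)$, and integrating against $F$ produces the stated mixture $\int \Phi(\alpha, \Sigma(z))\, dF(z)$; this is the qualitative departure from Theorem~\ref{main_theorem_D^2+E}, where the per-term variance $\tfrac{\sigma^2}{4}$ is constant and the mixture collapses to a single Gaussian. I expect the main obstacle to be the rigorous control of the remainder $R_n$ in the \emph{row-wise} (two-to-infinity) sense rather than in operator or Frobenius norm: a fixed-row CLT requires $R_n$ to be negligible entrywise, which needs the entrywise eigenvector perturbation machinery, and the distance-weighted noise together with the non-centered squared term makes the requisite concentration bounds more delicate than in Model~1.
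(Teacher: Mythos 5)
Your proposal follows essentially the same route as the paper: the same expansion $\hat{B}-B = -P(D\circ E + \tfrac{1}{2}E^{2})P$, the same linearization into the leading term $(\hat{B}-B)U_BS_B^{-1/2}$ (your $(\hat{B}-B)X(X^\top X)^{-1}$ is this term up to rotation) plus a remainder controlled row-wise via Davis--Kahan and two-to-infinity bounds, the same centering of $E_{ij}^2$ about $\sigma^2$ annihilated by $P$, the same per-term variance $\sigma^2 D_{ij}^2+\gamma D_{ij}+\tfrac{1}{4}\xi-\tfrac{1}{4}\sigma^4$ via Lindeberg--Feller and Slutsky, and the same conditioning on $Z_i=z$ to obtain the Gaussian mixture. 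You also correctly identify the row-wise (rather than Frobenius-norm) control of the remainder as the delicate step, which is precisely what the paper's Lemma~\ref{appthm4} and its supporting propositions handle.
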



\begin{theorem} (Central Limit Theorem for CMDS of $\Delta = D$ with missing entries)\\
\label{main_theorem_matrix_completion}
 Let $Z_1, Z_2, \dots, Z_n \iid F$ for some sub-Gaussian distribution $F$ on $\R^d$. Let $D$ be the Euclidean distance matrix generated by the $Z_i$'s, i.e. $D_{ij} = \|Z_i - Z_j\|$. Suppose that with probability $q_n \in [0,1]$ we observe the distance $D_{ij}$ and with probability $1 - q_n$ it is missing, i.e., $\Delta = D + E$ where $E_{ij} = (- D_{ij}) \times \mathrm{Bernoulli}(1-q_n)$. Denote by $\hat{X}_n$ the CMDS embedding configurations of $\Delta$ into $\R^d$. 
 Then there exists a sequence of $d \times d$ orthogonal matrices $\{W_n\}_{n=1}^{\infty}$ such that if $n q_n = \omega(\log^{4}{n})$, then for any $\alpha \in \mathbb{R}^{d}$ and any fixed row index $i$,  
  $$ \lim_{n {\to} \infty} \mathbb{P} \{\sqrt{n} [(\hat{X_n}W_n)_{i} - \sqrt{q_n}(Z_i - \bar{Z})] \leq \alpha\} = \int_{\mathrm{supp}(F)} \Phi(\alpha, \Sigma(\emph{z})) dF(\emph{z})$$
  where $\bar{Z}$ is the mean of $Z_i$'s and $\Phi(\alpha, \Sigma)$ denotes the CDF of a multivariate Gaussian with mean $0$ and covariance matrix $\Sigma$, evaluated at $\alpha$. Here $\Sigma (\emph{z})= {\Xi}^{-1} \widetilde{\Sigma}(\emph{z}) {\Xi}^{-1}$, $\Xi := \mathrm{Cov}(Z_i) \in \R^{d \times d}$ and with $\mu_z = \mathbb{E}[Z_i] \in \mathbb{R}^{d}$,
  $$\widetilde{\Sigma} (\emph{z}) := \mathbb{E}\Bigl[\tfrac{1-q_n}{4} \|\emph{z} - Z_k\|^4 (Z_k- {\mu}_{z}) (Z_k - {\mu}_{z})^\top\Bigr]$$ 
  is a covariance matrix depending on $\emph{z}$.
\end{theorem}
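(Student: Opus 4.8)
The plan is to reuse the perturbation-plus-CLT machinery developed for Theorem~\ref{main_theorem_D^2+E}, exploiting the fact that for the masking model the squared dissimilarities satisfy $\Delta_{ij}^2 = D_{ij}^2\,\chi_{ij}$ exactly, where $\chi_{ij}\sim\mathrm{Bernoulli}(q_n)$. First I would invoke the double-centering identity $B := -\tfrac12 PD^2P = \tilde Z\tilde Z^\top$, where $\tilde Z = PZ$ has rows $Z_i-\bar Z$, and observe that $\mathbb E[\hat B] = -\tfrac12 P\,\mathbb E[\Delta^2]\,P = q_n B$. Thus the ``signal'' Gram matrix is $M := q_n\tilde Z\tilde Z^\top$, whose rank-$d$ CMDS embedding is $X_M := U_M S_M^{1/2} = \sqrt{q_n}\,\tilde Z\,Q_n$ for some orthogonal $Q_n$; this is the origin of the centering term $\sqrt{q_n}(Z_i-\bar Z)$. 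Writing $\hat B = M + H$ with $H := -\tfrac12 P(\Delta^2 - q_n D^2)P$ a mean-zero perturbation, I would bound $\|H\|$ by matrix concentration. Because the nonzero signal eigenvalues of $M$ grow like $nq_n$, the hypothesis $nq_n=\omega(\log^4 n)$ is exactly what forces the eigengap to dominate $\|H\|$, legitimizing a Davis--Kahan/Neumann-series expansion.

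Next I would carry out the first-order eigenvector expansion already established for Theorem~\ref{main_theorem_D^2+E}, producing an orthogonal $W_n$ with
$$\hat X_n W_n - X_M = H\,U_M S_M^{-1/2} + R_n, \qquad U_M S_M^{-1/2} = X_M (X_M^\top X_M)^{-1},$$
where the remainder satisfies $\sqrt n\,\|(R_n)_i\| = o_P(1)$ row-wise. Using $P X_M = X_M$ I would simplify $H X_M = -\tfrac{\sqrt{q_n}}{2}\,P G\,\tilde Z\,Q_n$, where $G$ has entries $G_{ij}=\|Z_i-Z_j\|^2(\chi_{ij}-q_n)$, and argue that the left column-centering induced by $P$ together with the law-of-large-numbers replacement $(X_M^\top X_M)^{-1}=(q_n Q_n^\top\tilde Z^\top\tilde Z\,Q_n)^{-1}\to\tfrac1{nq_n}Q_n^\top\Xi^{-1}Q_n$ contribute only lower-order fluctuations. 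This reduces the $i$-th row to
$$\sqrt n\big[(\hat X_n W_n)_i - \sqrt{q_n}(Z_i-\bar Z)^\top Q_n\big] = -\frac{1}{2\sqrt{nq_n}}\Big(\sum_{j\neq i}\|Z_i-Z_j\|^2(\chi_{ij}-q_n)(Z_j-\bar Z)^\top\Big)\Xi^{-1}Q_n + o_P(1).$$

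I would then condition on $Z_i=z$ and apply a multivariate Lindeberg--Feller CLT to the sum, whose summands over $j\neq i$ are independent and mean-zero given $z$. Since $\mathrm{Var}(\chi_{ij}-q_n)=q_n(1-q_n)$, each term has conditional covariance $q_n(1-q_n)\|z-Z_j\|^4(Z_j-\bar Z)(Z_j-\bar Z)^\top$; summing over the $\approx n$ terms, dividing by $4nq_n$, and letting $\bar Z\to\mu_z$ yields precisely $\Xi^{-1}\widetilde\Sigma(z)\Xi^{-1}$ with $\widetilde\Sigma(z)=\mathbb E[\tfrac{1-q_n}{4}\|z-Z_k\|^4(Z_k-\mu_z)(Z_k-\mu_z)^\top]$. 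The Lindeberg condition follows from the sub-Gaussianity of $F$, which controls all moments of $\|z-Z_j\|$. Absorbing $Q_n$ into $W_n$ removes the residual rotation so that the centering becomes exactly $\sqrt{q_n}(Z_i-\bar Z)$, and since $Z_i\sim F$ is itself random, integrating the conditional Gaussian limit over $z\in\mathrm{supp}(F)$ produces the mixture $\int\Phi(\alpha,\Sigma(z))\,dF(z)$. The main obstacle I anticipate is the uniform control of the remainder and centering corrections in the weak-signal regime: because $nq_n$ may grow only marginally faster than $\log^4 n$, the noise-to-signal ratio $\|H\|/(nq_n)$ decays slowly, so showing that the quadratic perturbation term, the $P$-induced column averages, and the LLN substitution for $X_M^\top X_M$ are all $o_P(1/\sqrt n)$ after the $\sqrt n$ rescaling is delicate, and it is here that the $\omega(\log^4 n)$ threshold and the row-wise ($2\to\infty$-norm) eigenvector bounds are genuinely needed.
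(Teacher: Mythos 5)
Your proposal is correct and follows essentially the same route as the paper: the same leading-term-plus-remainder decomposition (the paper's Lemma \ref{appthm2} applied with the signal matrix $q_nB$ in place of $B$), the same conditional Lindeberg--Feller argument on the linear term $(\hat{B}-q_nB)U_BS_B^{-1/2}$ yielding the mixture of Gaussians, and the same spectral-norm/Davis--Kahan bounds to kill the remainder rows after $\sqrt{n}$ scaling. Your explicit identification of $\Delta_{ij}^2=D_{ij}^2\chi_{ij}$ and the resulting variance bookkeeping reproduces the stated $\widetilde{\Sigma}(z)$ exactly, so no further comparison is needed.
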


\section{Empirical Results}
\label{ER}

For illustrative purpose, we will focus on the error model $\Delta = D + E$ as in Section \ref{D+E} and Theorem \ref{main_theorem_D+E}. Experimental results for the other error models are completely analogous. 

\subsection{Three Point-mass Simulated Data}
\label{SD}

As a simple illustration of our CMDS CLT,  we embed noisy Euclidean distances obtained from $n$ points into $\R^2$.
We consider three points $x_1,x_2,x_3 \in \mathbb{R}^2$ for which the inter-point distances are 3,4 and 5
(these three points form a right triangle)
and generate $n_k = \pi_k n$ points equal to $x_k$, $k=1,2,3$, where $\pi = [0.2,0.3,0.5]^\top$.
The resulting Euclidean inter-point distance matrix $D$ is then subjected to uniform noise,
yielding $\Delta = D+E$ where $E_{ij} \iid \textrm{Uniform}(-4,+4)$ for $i<j$ and $E_{ij}=E_{ji}$.
For this case, our CLT for CMDS embedding into two dimensions gives class-conditional Gaussians.
For each $n \in \{50, 100, 500, 1000\}$, Figure \ref{fig:Simulation result} compares,
for one realization, the theoretical vs.\ estimated means and covariances matrices (95\% level curves).
Table \ref{tab:cov_1} shows the empirical covariance matrix for one of the point masses, $\hat{\Sigma}^{(1)}$, behaving in accordance with Theorem \ref{main_theorem_D+E}. 

Table \ref{tab:cov_1} investigates the empirical covariance matrix for one of the point masses, and its entry-wise variance, as a function of $n$. The theoretical covariance matrix is $\Sigma^{(1)} = \begin{bmatrix} 13.56 & -3.06 \\ -3.06 & 22.65 \end{bmatrix}$.\\

{
\centering
\label{tab:cov_1}
\begin{tabular}{ c c c c c c  }
\toprule
\textbf{}  & \textbf{$n$=50} & \textbf{$n$=100} & \textbf{$n$=500} & \textbf{$n$=1000}\\
\midrule\\
\addlinespace[-2ex]
$\hat{\Sigma}^{(1)}:$ &
$  \begin{bmatrix}  14.15 & 0.25 \\ 0.25 &  79.07 \end{bmatrix}$ &
$  \begin{bmatrix} 13.67 & -0.79 \\ -0.79 & 98.96 \end{bmatrix}$ &
$ \begin{bmatrix}  13.65 & -2.34 \\ -2.34 &  41.02 \end{bmatrix}$ &
$ \begin{bmatrix} 13.63 & -2.70 \\ -2.70 & 31.76 \end{bmatrix}$ &\\
\addlinespace[2ex]
$\mathrm{Var}\begin{bmatrix}
           \hat{\Sigma}^{(1)}_{11} \\
           \hat{\Sigma}^{(1)}_{12} \\
           \hat{\Sigma}^{(1)}_{22} \end{bmatrix}:$ &
$  \begin{bmatrix}  41.25 \\ 113.31 \\  829.52 \end{bmatrix}$&
$  \begin{bmatrix} 19.29  \\ 68.06 \\ 984.45 \end{bmatrix}$ &
$ \begin{bmatrix}  3.67 \ \\ 7.87 \\  31.71 \end{bmatrix}$ &
$ \begin{bmatrix} 1.71  \\ 3.25 \\ 11.08 \end{bmatrix}$ &\\
\addlinespace[2ex]
\bottomrule
\end{tabular}
\captionof{table}{Empirical average of covariance matrix $\hat{\Sigma}^{(1)}$, and entry-wise variance, via 500 simulations.} 
}
\begin{remark}
In this simulation we relax the requirement that the entries of $\Delta$ should be nonnegative in order to illustrate the phenomenon of decreasing covariance with increasing $n$. 
\end{remark}

\begin{figure}[htbp]
    \centering
    \subfloat[$n$=50]{{\includegraphics[width=.35\textwidth]{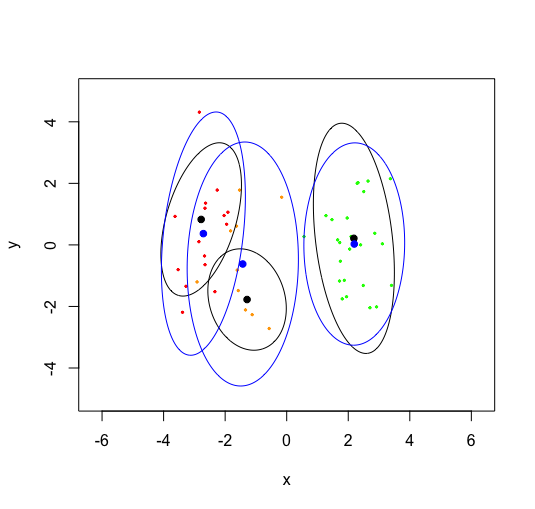} }}
    \quad
    \subfloat[$n$=100]{{\includegraphics[width=.35\textwidth]{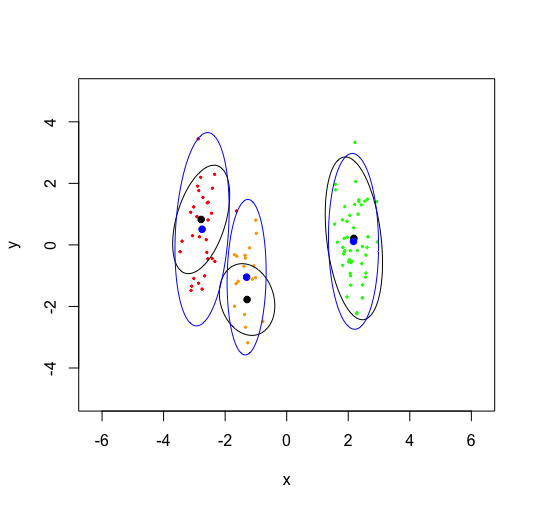} }}
    \quad
    \subfloat[$n$=500]{{\includegraphics[width=.35\textwidth]{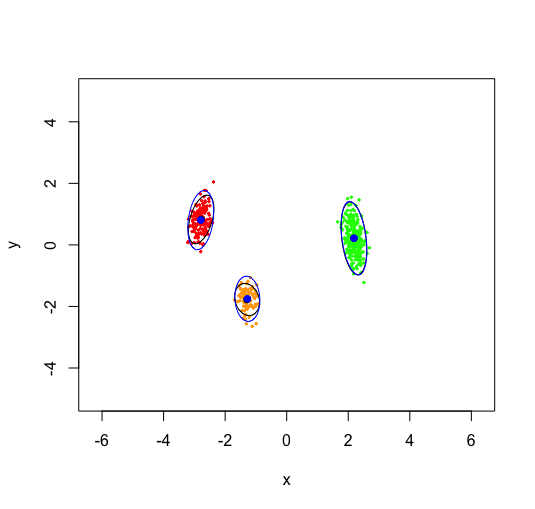} }}
    \quad
    \subfloat[$n$=1000]{{\includegraphics[width=.35\textwidth]{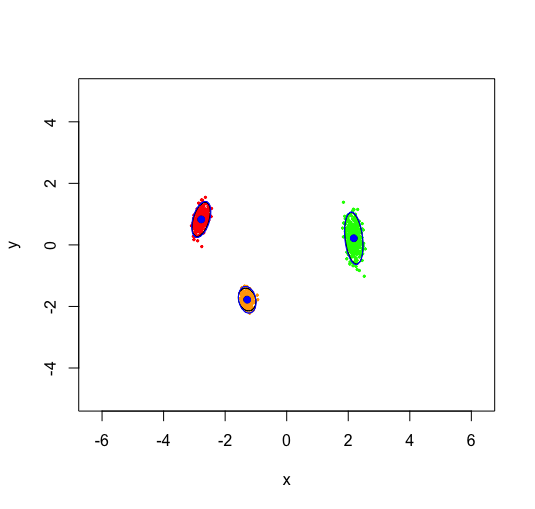} }}
    \caption{Simulation results for $n$=50, 100, 500 and 1000 points, as described in Section \ref{SD}. The blue ellipses are the 95\% level curves of the empirical covariance matrix, and the blue dots are the empirical centers for three classes. The black dots are the true positions of $x_1$, $x_2$ and $x_3$, and the black ellipses are the 95\% level curve for the theoretical covariance matrices as in Theorem \ref{main_theorem_D+E}. Note that the blue and black centers and ellipses coincide for large $n$.}%
    \label{fig:Simulation result}%
\end{figure}

\subsection{Shape clustering}
As a second illustration of the effect of noise on CMDS, we examine a more involved clustering experiment in the (non-Euclidean) shape space of closed curves. In this experiment, we consider boundary curves obtained from silhouettes of the Kimia shape database. Specifically, we restrict attention to three predefined classes of objects (bottle, bone, and wrench) and take from each class three different examples of shapes all given by planar closed polygonal curves representing the objects' outline. Figure \ref{fig:bottle_bone_wrench} shows one instance for each of the bottle, bone, and wrench class. A database of noisy curves is then created as follows: for each of the nine template shapes, we generate 100 noisy realizations in which vertices of the curve are moved along the curve's normal vectors with random distances drawn from independent Gaussian distributions at each vertex. This results in a total of 900 noisy versions of the initial curves such as the ones displayed in Figure \ref{fig:noisy_bottle_bone_wrench}.

\begin{figure}[htp]
    \centering
    \subfloat[Bottle]{{\includegraphics[width=.25\textwidth]{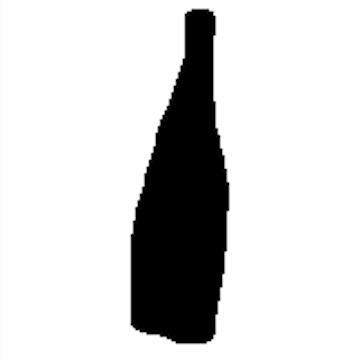} }}
    \quad
    \subfloat[Bone]{{\includegraphics[width=.25\textwidth]{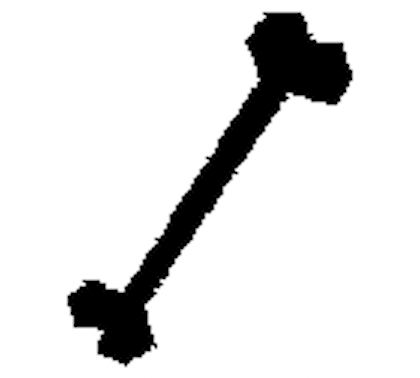} }}
    \quad
    \subfloat[Wrench]{{\includegraphics[width=.25\textwidth]{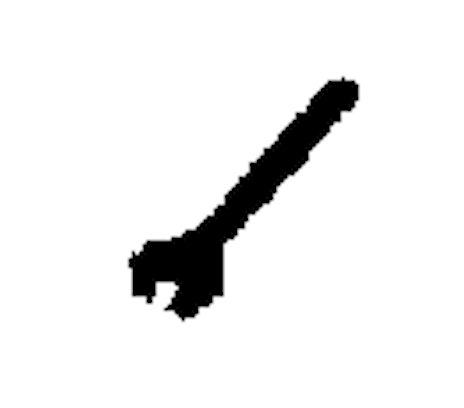} }}
    \caption{Examples from the Kimia Dataset.}
    \label{fig:bottle_bone_wrench}
\end{figure}

\begin{figure}[htp]
    \centering
    \subfloat[Bottle]{{\includegraphics[width=.25\textwidth]{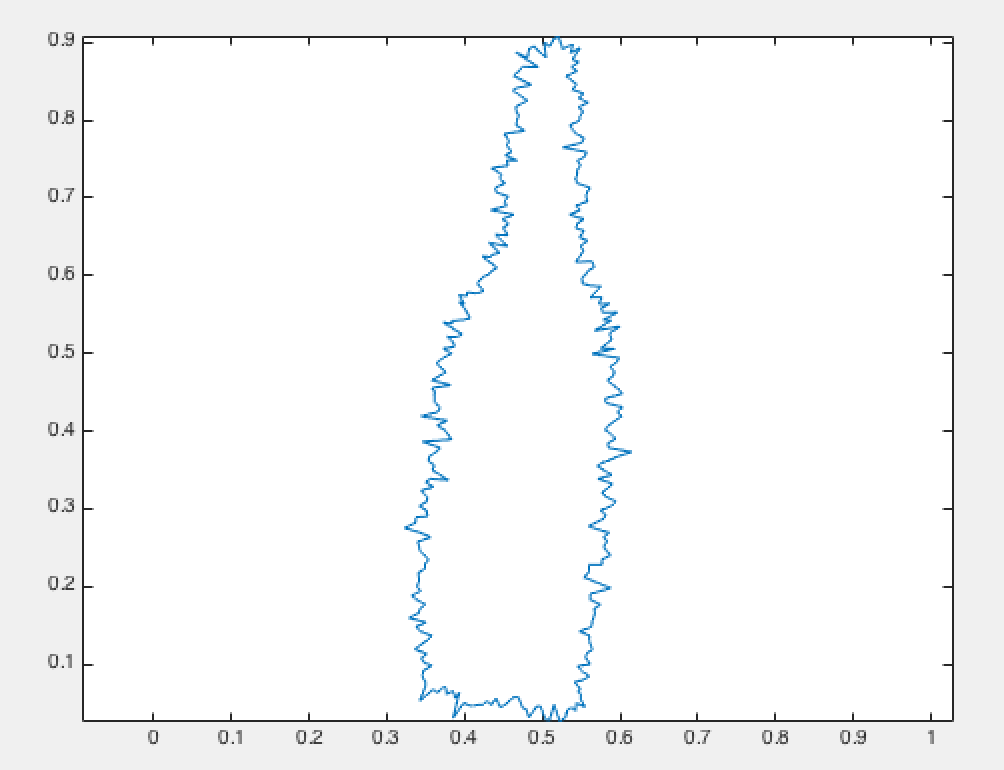} }}
    \quad
    \subfloat[Bone]{{\includegraphics[width=.25\textwidth]{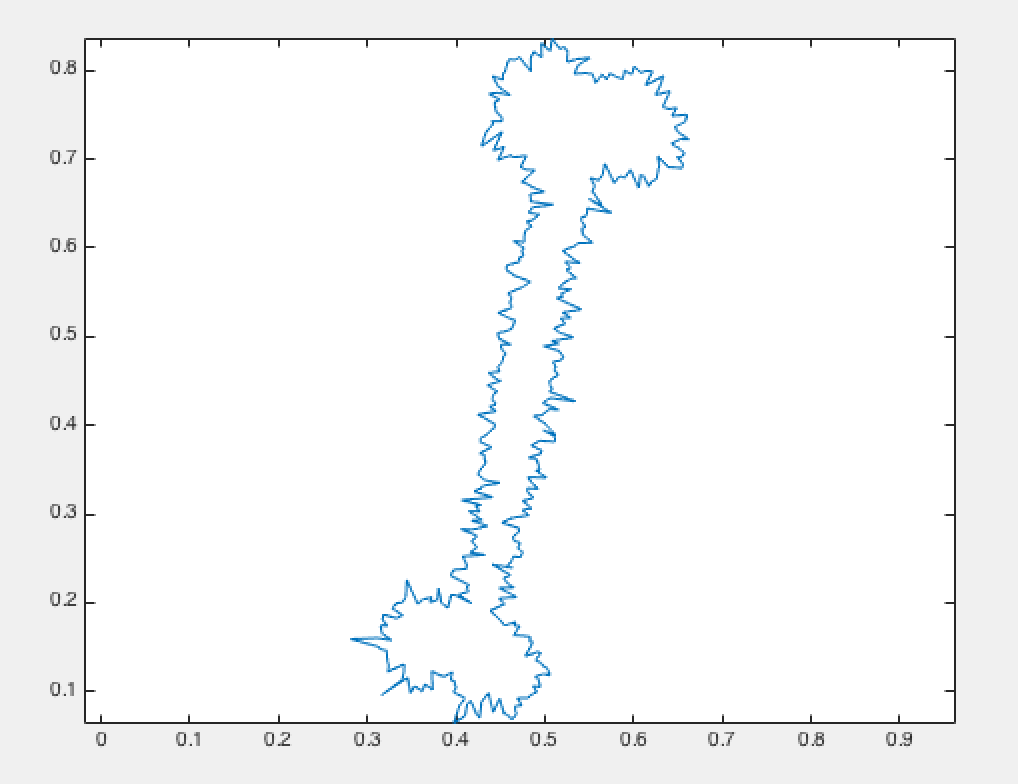} }}
    \quad
    \subfloat[Wrench]{{\includegraphics[width=.25\textwidth]{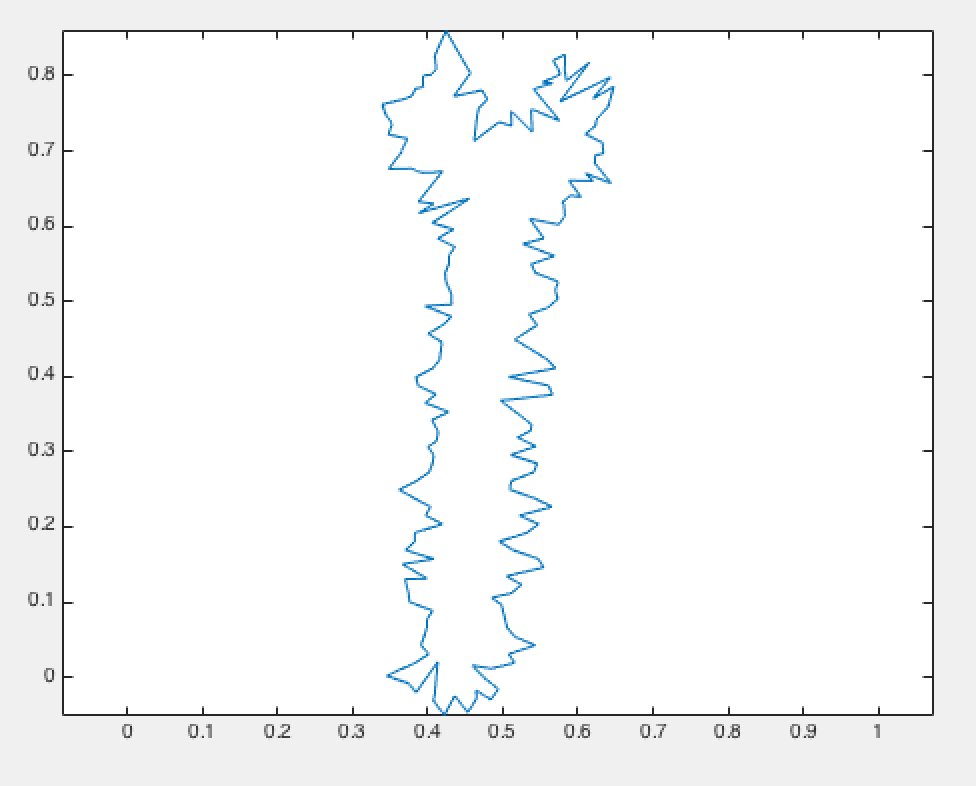} }}
    \caption{Noisy versions of examples from the Kimia Dataset.}
    \label{fig:noisy_bottle_bone_wrench}
\end{figure}

We then compute the pairwise distance matrix between all the curves (including the noiseless templates) based on a shape distance which was introduced in \cite{Glaunes2008} and later extended in the work of \cite{Kaltenmark2017}. This type of metric is based on the representation of shapes in a particular distribution space called currents, see \cite{Kaltenmark2017} for details. In our context, this metric offers several advantages: (i) the distance is completely geometrical in the sense that it is independent of the sampling of the curves and does not rely on predefined pointwise correspondences between vertices; (ii) it has an intrinsic smoothing effect that provides robustness to noise to a certain degree; (iii) it can be computed in closed form with minimal computational time which is critical given the large number of pairwise distances to evaluate. In this setting, we can view the resulting distance matrix as a perturbation of the ideal distances between the 9 template curves, which fits into the generic framework of our model. (Note that we leave aside the issue of checking the technical assumptions on the matrix $E$, which may be quite involved for this noise model and distance.)

We proceed to perform CMDS on this distance matrix. A scree plot investigation
shows that an appropriate embedding dimension here is $\hat{d} = 3$ (the top three eigenvalues are 2.20, 0.68, 0.06 with the fourth $\ll$ 0.01).
The resulting embedding configuration is shown in Figure \ref{fig:large currents}. 
This configuration exhibits nine fairly well-separated clusters roughly centered around the position of each of the noiseless template curves. Those, in turn, form 3 `super-clusters' consistent with the classes. Furthermore, the ellipsoidal shape of each cluster suggests that the configuration approximately follows a Gaussian distribution.

\begin{figure}[htp]
\centering
\includegraphics[scale=0.85]{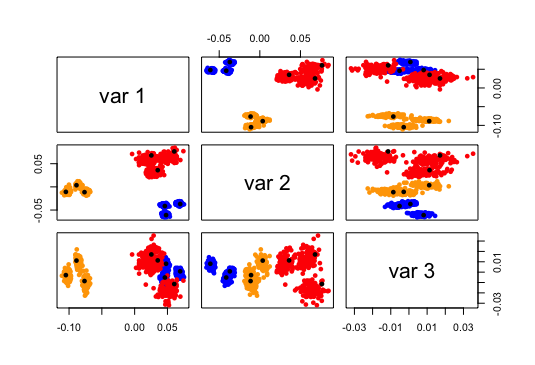}
\caption{Pairs plot of CMDS into $\mathbb{R}^3$ for the noisy curves. Colors correspond to the different classes (blue for bottle, red for bone, and orange for wrench). The position of the nine template curves in the configuration are highlighted with large black dots.}
    \label{fig:large currents}
\end{figure}

While these preliminary shape clustering results are obtained with a specific and simple distance on the space of curves, future work will investigate whether similar properties hold with different, more elaborate metrics and/or geometric noise models. The central limit theorem derived here could then constitute a useful theoretical tool to evaluate the discriminating power of shape clustering methods based on CMDS.

\section{Discussion}
\label{D}

In \citet{Athreya2016} and \citet{OMNI}, the authors prove that adjacency spectral embedding of the random dot product graph gives rise to a central limit theorem for the estimated latent positions. In this work we extend these results to the previously unexplored area of perturbation analysis for CMDS, addressing a gap in the literature as acknowledged in \citet{Fan} and \citet{Peterfreund&Gavish}. Notably, the three noise models we proposed in Section \ref{sec:M&E} each give rise to a central limit theorem; that is, for Euclidean distance matrix, the rows of the configuration matrix given by CMDS under noise will center around the corresponding rows of the true configuration matrix. Furthermore, our simulations on the synthetic data together with the shape clustering data all demonstrated the validity of our results. We have avoided any discussion of the model selection problem of choosing a suitable embedding dimension $\hat{d}$. Instead, we assume $d$ is known -- except in Section 4.2. There are many methods for choosing (spectral) embedding dimensions, see \cite{Zhu&Ghodsi, Jackson, Chatterjee}.

One Natural question can be raised is how to estimate the $\sigma$ in the noise model of interests. However, we would like to point out that for our embedding method and associated theoretical results, consistent estimation of $\sigma$ is not important. Indeed, the classical multidimensional scaling algorithm does not require estimating $\sigma$, but rather the dimension $d$ of the original data points (see the description of classical multidimensional scaling in section 1.1). Under all of our noise model, $\|\mathbf{E}\| \leq \sigma \sqrt{n}$ and provided that we choose $d$ such that $\lambda_d > n^{1/2 + \epsilon}$ for any $\epsilon > 0$, then our theoretical limit results apply. For concreteness, we can choose $\epsilon = 1/3$ and thus as long as we choose the embedding dimension $\hat{d}$ satisfying $\lambda_{\hat{d}}(B) \geq n^{2/3}$, then $\hat{d} \rightarrow d$ almost surely and our central limit theorem applies. 

Throught this paper, we assume that $d$ is fixed as $n \rightarrow \infty$. Therefore, given a central limit theorem for the embedding into $d$ dimension, one can derive a central limit theorem for the embedding into $d' < d$ dimension in a straightforward manner. More specifically, given a dissimilarity matrix $\hat{\Delta}^{(2)}$ and positive integers $d' \leq d$, the classical multidimensional scaling of $\hat{D}^{(2)}$ into $\mathbb{R}^{d'}$ is equivalent to the classical multidimensional scaling of $\hat{\Delta}^{(2)}$ into $\mathbb{R}^{d}$ and keeping the first $d' < d$ columns (see the description of classical multidimensional scaling in Section 1.1). Thus, our limit results can be rephrased to say that, letting $\hat{X}_n^{(d')}$ denote the classical multidimensional scaling of $\hat{D}^{(2)}$ into $\mathbb{R}^{d'}$ for $d' < d$, that there exists a sequence of $d' \times d'$ orthogonal matrix $W_n^{(d')}$ and a sequence of $d \times d'$ matrices with orthonormal columns $T_n$ such that
$$ \sqrt{n}\Bigl((\hat{X}_n^{(d')} W_n^{(d')})_{i} -  T_n( Z_n - \bar{Z}_n)_{i}\Bigr) $$
converges to a mixture of multivariate normal. For a given $n$, $T_n$ is a matrix corresponding the principal component projection of $Z_n$ into $\mathbb{R}^{d}$. We emphasize that $T_n$ is not necessarily unique (indeed, the eigenvalues of the covariance matrix for $Z_n$ are not necessarily distinct). 

We further note that the dependency on $d$ in our limit results is implicit in the covariance matrices. Naively speaking, we can say that the estimation accuracy is inversely proportional to $d$. This is most visible in the statement of Equation (1) (which is also a corollary of our results), since as $d$ increases $r$ also increases, note that $r \leq d + 2$. A more precise description is that the accuracy of our limit results depends on the covariance matrix $\Sigma$, which is a $d \times d$ matrix. Since the squared norm of a mean $0$ multivariate Gaussian is the trace of its covariance matrix, we see that as $d$ increases, the trace of $\Sigma$ does not have to increase with $d$. Indeed, the trace of $\Sigma$ depends purely on the distribution $F$ of the underlying data points; in the case where the data points are sampled from a multivariate normal with mean $0$ and identity matrix in $\mathbb{R}^{d}$, then as $d$ increases, the trace of $\Sigma$ also increases linearly.

Our presentation emphasizes the central limit theorem mainly because it is a succinct limit results. Nevertheless, the uniform or global error bounds can be established in a similar manner. More specifically, the central limit theorem for a fixed index $i$ is a consequence of applying the Lindeberg-Feller central limit theorem to Eq.(5) (which is a sum of independent mean $0$ random variables). If, instead of the Lindeberg-Feller central limit theorem, we apply a concentration inequality a la Hoeffding/Bernstein, then we can show that for any index $i$, $\|(\hat{X}_n W_n)_i - (Z_i - \bar{Z})\| \leq C n^{-1/2}$ with high probability. A union bound over the $n$ rows of $X_n$ then implies
$$\sup_{i \in [n]} \| (\hat{X}_n W_n)_{i} - (Z_i - \bar{Z}) \|  \leq C \sqrt{\frac{\log{n}}{n}}; \quad n^{-1} \sum_{i} \| (\hat{X}_n W_n)_{i} - (Z_i - \bar{Z}) \|  \leq C \sqrt{\frac{\log{n}}{n}}.$$

\begin{figure}[tp]
    \centering
    \subfloat[$n$=50]{{\includegraphics[width=.35\textwidth]{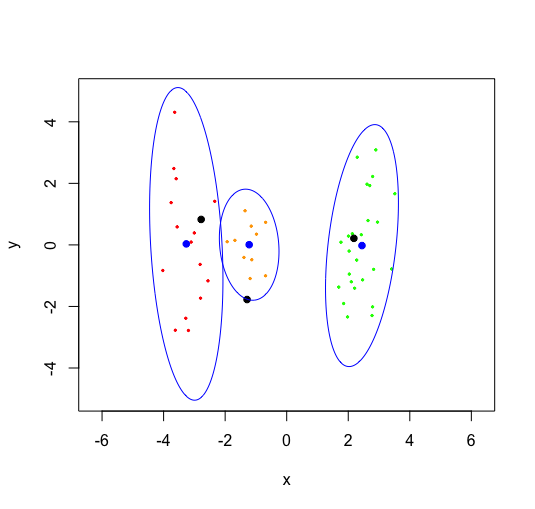} }}
    \quad
    \subfloat[$n$=100]{{\includegraphics[width=.35\textwidth]{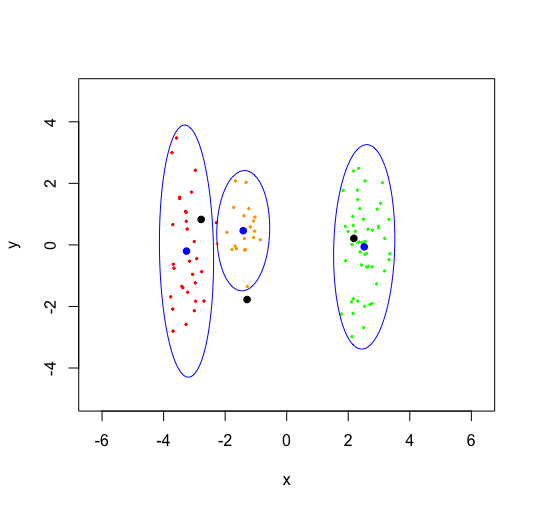} }}
    \quad
    \subfloat[$n$=500]{{\includegraphics[width=.35\textwidth]{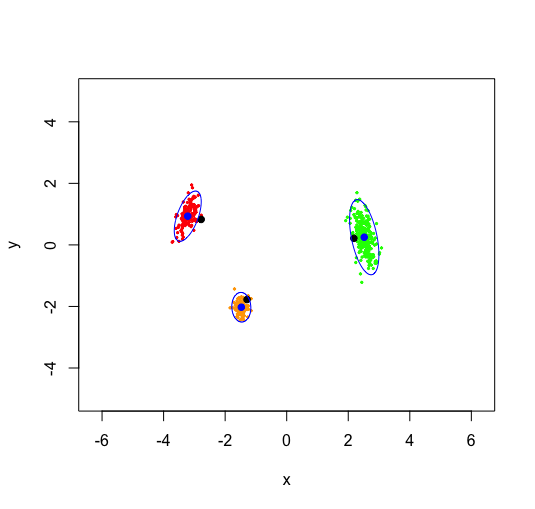} }}
    \quad
    \subfloat[$n$=1000]{{\includegraphics[width=.35\textwidth]{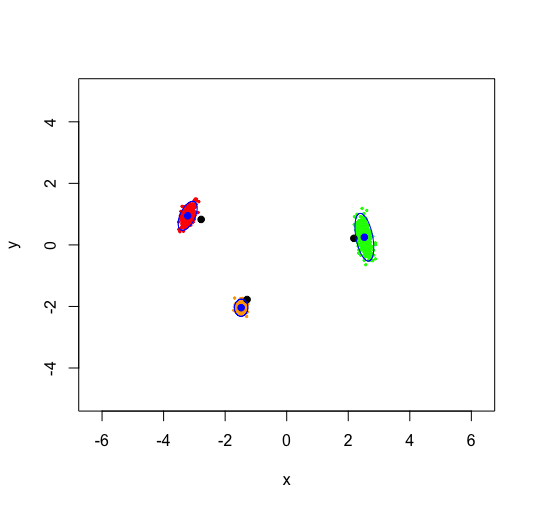} }}
    \caption{Simulation of CMDS with heteroscedastic noise $\widetilde{E}$. The black dots are the true positions for the three points. The blue dots are the empirical means and the blue ellipses are the 95\% level curve of the empirical covariance matrix. Note that $\widetilde{E}$ used in this simulation is of the same order for the off-diagonal blocks as that used in Figure \ref{fig:Simulation result}. NB: there is asymptotic bias.}
    \label{fig:Uncommon_var_E}
\end{figure}

A practically relevant and conceptually illustrative example comes from relaxing the assumption of common variance for the entries of the noise matrix $E$ in Section \ref{D+E}: the consistency result from Theorem \ref{main_theorem_D+E} no longer holds. To illustrate this point, we return to our three-point-mass simulation presented in Section \ref{SD} and modify our noise model as follows: Let $\widetilde{E}_{ij} \iid \textrm{Uniform}(-D_{ij}, +D_{ij})$ for $i < j$ and $\widetilde{E}_{ij}= \widetilde{E}_{ji}$. (The noise now depends on the entries of $D$, and $\Delta = D + \widetilde{E}$ no longer has negative entries.) The embedding of $\Delta$ into two dimensions gives class-conditional Gaussians; however, we have introduced bias into the embedding configuration. Figure \ref{fig:Uncommon_var_E} shows, for one realization, the embedding result. Note that the empirical mean and the theoretical positions do not coincide in simulation with large $n$, and theoretically even in the limit.
\begin{figure}[tp]
    \centering
    \subfloat[$n$=50]{{\includegraphics[width=.35\textwidth]{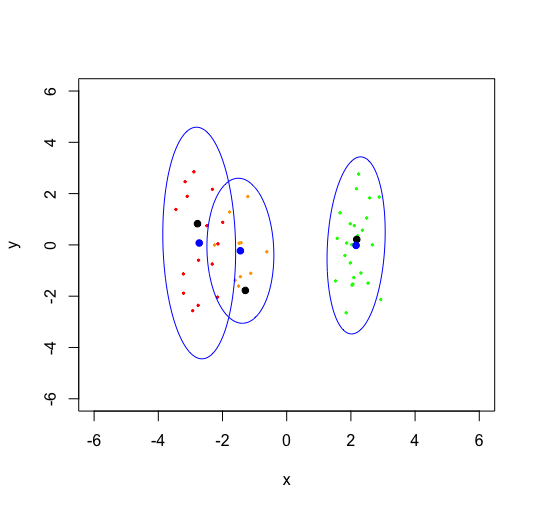} }}
    \quad
    \subfloat[$n$=100]{{\includegraphics[width=.35\textwidth]{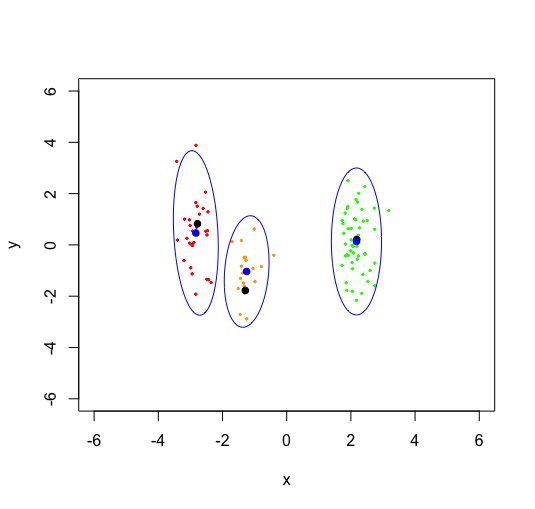} }}
    \quad
    \subfloat[$n$=500]{{\includegraphics[width=.35\textwidth]{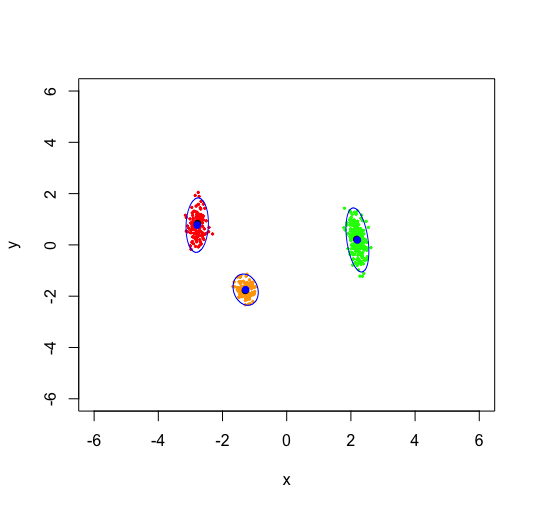} }}
    \quad
    \subfloat[$n$=1000]{{\includegraphics[width=.35\textwidth]{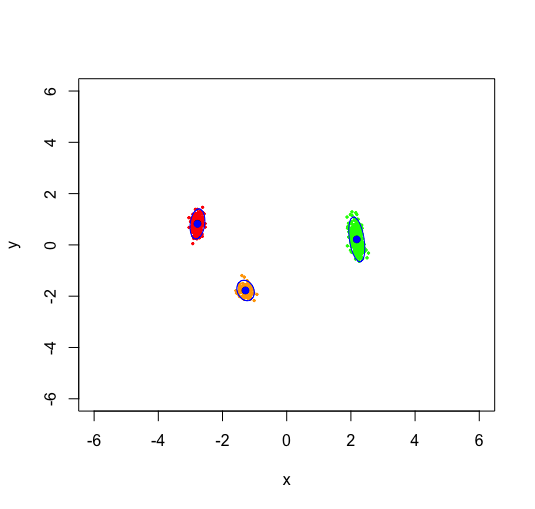} }}
    \caption{Simulation of MDS using raw stress criterion for $n$=50, 100, 500 and 1000 points.
                  The black dots are the true positions of $x_1$, $x_2$ and $x_3$, the blue dots are the empirical mean of the simulation and the blue ellipses are the 95\% level curve of the empirical covariance matrix.}
    \label{fig:RawStress}
\end{figure}

CMDS is just one of a wide variety of multidimensional scaling techniques. Minimizing the raw stress criterion is another commonly used MDS technique \citep{Leeuw-Heiser}, i.e., given a $n \times n$ observed dissimilarity matrix $\Delta$ and an embedding dimension $d$, one seeks to minimize the objective function $$\sigma_r = \sigma_r(X) = \sum\limits_{(i,j)} (\delta_{ij} - \|X_i - X_j\|)^2.$$ The minimization of $\sigma_r(X)$ is with respect to all configurations $X \in \mathbb{R}^{n \times d}$ and usually proceeds via an iterative algorithm which updates the configuration matrix $X$ until a stopping criterion is met. Keeping the simulation settings as in Section \ref{SD},  the resulting configuration is shown in Figure \ref{fig:RawStress}. This suggests that the CLT may hold for raw stress just as well as for CMDS. However, this claim is at best a conjecture at present as perturbation analysis of stress minimization algorithms is significantly more involved.

\appendix
\section*{Appendix: Proofs of stated results}
Throughout this Appendix, $\| A \|$ denotes the spectral norm of matrix $A$ and $\|A\|_F$ denotes its Frobenius norm. 
We will utilize the following observation repeatedly in our presentation.
\begin{observation}
\label{appthm1}
 Let $A$ and $B$ be matrices of appropriate dimensions. Then 
 $$\|A B\|_F = \|B^{\top} A^{\top}\|_{F} \leq \min\{ \|A\| \times \|B\|_F, \|B\| \times \|A\|_{F}\}.$$
\end{observation}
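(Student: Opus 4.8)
The plan is to prove the equality and then the two halves of the minimum, reducing the second half to the first by transposition. The equality $\|AB\|_F = \|B^{\top}A^{\top}\|_F$ is immediate: since $(AB)^{\top} = B^{\top}A^{\top}$ and the Frobenius norm is the square root of the sum of squared entries, it is invariant under transposition, so $\|AB\|_F = \|(AB)^{\top}\|_F = \|B^{\top}A^{\top}\|_F$. This disposes of the left-hand identity and also records the two facts I will lean on throughout: $\|M^{\top}\|_F = \|M\|_F$ and (for the spectral norm) $\|M^{\top}\| = \|M\|$.

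Next I would establish $\|AB\|_F \leq \|A\| \cdot \|B\|_F$ by decomposing the Frobenius norm column by column. Writing $b_j$ for the $j$-th column of $B$, the $j$-th column of $AB$ is exactly $A b_j$, so $\|AB\|_F^2 = \sum_j \|A b_j\|_2^2$. The defining property of the spectral norm as an operator norm gives $\|A b_j\|_2 \leq \|A\| \cdot \|b_j\|_2$ for every $j$, whence
$$\|AB\|_F^2 = \sum_j \|A b_j\|_2^2 \leq \|A\|^2 \sum_j \|b_j\|_2^2 = \|A\|^2 \|B\|_F^2.$$
Taking square roots yields the first bound.

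For the second bound $\|AB\|_F \leq \|B\| \cdot \|A\|_F$, I would simply apply the inequality just proved to the transposed product. Using the equality from the first step and then the column-wise bound with $B^{\top}$ playing the role of the left factor,
$$\|AB\|_F = \|B^{\top}A^{\top}\|_F \leq \|B^{\top}\| \cdot \|A^{\top}\|_F = \|B\| \cdot \|A\|_F,$$
where the last step invokes the transpose-invariance of both norms noted above. Combining the two bounds gives the minimum, completing the argument.

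There is no substantive obstacle here; the result is a routine consequence of the operator-norm characterization of the spectral norm together with the block/column structure of matrix multiplication. The only point meriting a moment's care is the observation that the two halves of the minimum are dual to one another under transposition, so that the second need not be proved from scratch but follows from the first applied to $B^{\top}A^{\top}$.
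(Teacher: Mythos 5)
Your proof is correct, and it is the standard argument: transpose-invariance of both norms gives the equality, the column-wise decomposition $\|AB\|_F^2 = \sum_j \|Ab_j\|_2^2$ combined with $\|Ab_j\|_2 \leq \|A\|\,\|b_j\|_2$ gives the first bound, and duality under transposition gives the second. The paper states this as an Observation and supplies no proof at all, so there is nothing to compare against; your argument fills that gap cleanly and there are no issues with it.
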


We remind our readers the following notations for the subsequent  presentation. Recall that $B = -\tfrac{1}{2} P D^2 P$ and $\hat{B} = -\frac{1}{2} P \Delta^2 P$ are the double centering of $D^2$ and $\Delta^2$, respectively. Note that if $D^2$ is a Euclidean distance matrix whose elements are $D_{ij} = \|Z_i -Z_j\|$, then $B = P Z Z^{\top} P$. Note then that $U_B S_B^{1/2} = P Z \tilde{W}_n$ for some $\tilde{W}_n$. Thus the $i$-th row of $U_B S_B^{1/2}$ is $\tilde{W}_n^{\top} (Z_i - \bar{Z})$ for some orthogonal $\tilde{W}_n$. Now let $W^{*}$ be the orthogonal matrix satisfying $W^{*} = \argmin_{W} \|U_B^{\top} \hat{U}_B - W\|$. Our main goal is to investigate the quantity $\hat{X} - U_B S_B^{1/2} W^* $. The following lemma provides a decomposition for $\hat{X} - U_B S_B^{1/2} W^*$ into a sum of several matrices. 
\begin{lemma}
\label{appthm2}
  Let $W^{*}$ be the orthogonal matrix satisfying $W^{*} = \argmin_{W} \|U_B^{\top} \hat{U}_B - W\|$. Then 
    \begin{align}
          \hat{X} - U_B S_B^{1/2} W^*  &=   ( \hat{B} - B) U_B S_B^{-1/2} W^{*}  \label{term1}\\
          & - (\hat{B} - B) U_B (S_{B}^{-1/2} W^{*} - W^{*} S_{\hat{B}}^{-1/2})     \label{term2}\\
          & - U_B U_B^{\top} (\hat{B} - B) U_B W^* S_{\hat{B}}^{-1/2}      \label{term3}\\
          & + (I - U_B U_B^{\top}) (\hat{B} - B) (U_{\hat{B}} - U_B W^{*}) S_{\hat{B}}^{-1/2}     \label{term4}\\
          & + U_B (U_B^{\top} U_{\hat{B}} - W^*) S_{\hat{B}}^{1/2}     \label{term5}\\
          & + U_B (W^{*} S_{\hat{B}}^{1/2} - S_B^{1/2} W^{*})     \label{term6}
    \end{align}
\end{lemma}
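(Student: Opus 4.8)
The statement is a purely algebraic identity relating the six listed matrices, so the plan is to verify it directly rather than to approximate. The strategy I would follow is to expand the right-hand side, group terms so that the diagonal scaling factors telescope, and then collapse the result using two structural facts: first, that the rank-$d$ signal matrix admits the exact decomposition $B = U_B S_B U_B^{\top}$, so that the $n-d$ trailing eigenvalues vanish and $(I - U_B U_B^{\top})B = 0$; and second, that $U_{\hat B}$ diagonalizes $\hat B$ on the top-$d$ subspace, i.e.\ $\hat B U_{\hat B} = U_{\hat B} S_{\hat B}$, which I will use in the form $(\hat B - B)U_{\hat B} = U_{\hat B} S_{\hat B} - B U_{\hat B}$. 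Throughout I write $\Delta B := \hat B - B$ and keep in mind that $S_B$ and $S_{\hat B}$ are invertible (guaranteed by positivity of the relevant top-$d$ eigenvalues), so that $S_B^{-1/2}$ and $S_{\hat B}^{-1/2}$ make sense.

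First I would combine \eqref{term1} and \eqref{term2}: the difference $S_B^{-1/2}W^{*} - (S_B^{-1/2}W^{*} - W^{*} S_{\hat B}^{-1/2})$ telescopes to $W^{*} S_{\hat B}^{-1/2}$, so $\eqref{term1} - \eqref{term2} = \Delta B\, U_B W^{*} S_{\hat B}^{-1/2}$. Inserting $I = U_B U_B^{\top} + (I - U_B U_B^{\top})$ in front of this expression and recognizing $U_B U_B^{\top}\Delta B\, U_B W^{*} S_{\hat B}^{-1/2}$ as exactly \eqref{term3}, I obtain $\eqref{term1} - \eqref{term2} - \eqref{term3} = (I - U_B U_B^{\top})\Delta B\, U_B W^{*} S_{\hat B}^{-1/2}$. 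Adding \eqref{term4} and using $U_B W^{*} + (U_{\hat B} - U_B W^{*}) = U_{\hat B}$ merges the two residual pieces into the single expression $(I - U_B U_B^{\top})\Delta B\, U_{\hat B} S_{\hat B}^{-1/2}$.

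Next I would invoke the eigen-relation: replacing $\Delta B\, U_{\hat B}$ by $U_{\hat B} S_{\hat B} - B U_{\hat B}$ and using $(I - U_B U_B^{\top})B = 0$ annihilates the $B U_{\hat B}$ contribution, leaving $(I - U_B U_B^{\top})U_{\hat B} S_{\hat B}^{1/2} = \hat X - U_B(U_B^{\top} U_{\hat B})S_{\hat B}^{1/2}$. Finally, the last two terms are designed to telescope: $\eqref{term5} + \eqref{term6} = U_B(U_B^{\top} U_{\hat B})S_{\hat B}^{1/2} - U_B S_B^{1/2}W^{*}$, with the intermediate $U_B W^{*} S_{\hat B}^{1/2}$ cancelling. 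Summing all six contributions, the spurious $U_B(U_B^{\top} U_{\hat B})S_{\hat B}^{1/2}$ cancels and I am left with $\hat X - U_B S_B^{1/2}W^{*}$, which is the claimed identity.

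Because the statement is an exact identity, there is no analytic obstacle; the only real care needed is bookkeeping — ensuring each insertion of $I = U_B U_B^{\top} + (I - U_B U_B^{\top})$ and each telescoping of the factors $S_B^{\pm 1/2}$, $S_{\hat B}^{\pm 1/2}$, $W^{*}$ is performed on the correct side, since these matrices do not commute. The one genuinely structural input is the pair of facts $(I - U_B U_B^{\top})B = 0$ and $\hat B U_{\hat B} = U_{\hat B} S_{\hat B}$; the former relies on $B = (PZ)(PZ)^{\top}$ having rank exactly $d$, which holds whenever the centered points $PZ$ span $\R^{d}$, while both require the relevant top-$d$ eigenvalues to be positive so that $S_B^{-1/2}$ and $S_{\hat B}^{-1/2}$ exist. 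These conditions hold under the theorem's assumptions and the eigenvalue-gap guarantees established elsewhere in the Appendix.
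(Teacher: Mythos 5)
Your proof is correct and is essentially the paper's own argument run in reverse: you collapse the right-hand side back to $\hat X - U_B S_B^{1/2}W^{*}$ using exactly the same two structural facts the paper uses ($U_B U_B^{\top}B = B$ and $U_{\hat B}S_{\hat B} = \hat B U_{\hat B}$) and the same telescoping insertions of $I = U_BU_B^{\top} + (I - U_BU_B^{\top})$ and $U_{\hat B} = U_BW^{*} + (U_{\hat B} - U_BW^{*})$. The direction of verification is immaterial for an exact algebraic identity, so there is nothing to add.
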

\begin{proof}
We have 
        \begin{align*}
          \hat{X} - U_B S_B^{1/2} W^*  
          & = U_{\hat{B}} S_{\hat{B}}^{1/2} - U_B W^* S_{\hat{B}}^{1/2} + U_B (W^* S_{\hat{B}}^{1/2} - S_{B}^{1/2} W^*) \\
          & = U_{\hat{B}} S_{\hat{B}}^{1/2} - U_B U_B^{\top} U_{\hat{B}} S_{\hat{B}}^{1/2} + U_B U_B^{\top} U_{\hat{B}} S_{\hat{B}}^{1/2} - U_B W^* S_{\hat{B}}^{1/2} + U_B (W^* S_{\hat{B}}^{1/2} - S_{B}^{1/2} W^*) \\
          & = (I - U_B U_B^{\top}) \hat{B} U_{\hat{B}} S_{\hat{B}}^{-1/2} + U_B (U_B^{\top} U_{\hat{B}} - W^{*}) S_{\hat{B}}^{1/2} + U_B (W^* S_{\hat{B}}^{1/2} - S_{B}^{1/2} W^*) \\
          &= (I - U_B U_B^{\top}) (\hat{B} - B) U_{\hat{B}} S_{\hat{B}}^{-1/2} + U_B (U_B^{\top} U_{\hat{B}} - W^*) S_{\hat{B}}^{1/2} + U_B (W^* S_{\hat{B}}^{1/2} - S_{B}^{1/2} W^*) 
        \end{align*}
        Note that we used the facts $ U_B U_B^\top B = B$ and $U_{\hat{B}} S_{\hat{B}}^{1/2} = \hat{B} U_{\hat{B}} S_{\hat{B}}^{-1/2}$ in the above equalities. The last two terms of the above display is Eq.~\eqref{term5} and Eq.~\eqref{term6} in the statement of the Lemma. We now consider the term 
        $(I - U_B U_B^{\top}) (\hat{B} - B) U_{\hat{B}} S_{\hat{B}}^{-1/2}$. 
        \begin{equation*}
        \begin{split}
        (I - U_B U_B^{\top}) (\hat{B} - B) U_{\hat{B}} S_{\hat{B}}^{-1/2}
        &
            =  (I - U_B U_B^{\top}) (\hat{B} - B) ( U_BW^* + \hat{U}_B - U_{B} W^{*}) S_{\hat{B}}^{-1/2}   \\
          & = (\hat{B} -B) U_B W^* S_{\hat{B}}^{-1/2} - U_B U_B^{\top} (\hat{B} - B) U_B W^* S_{\hat{B}}^{-1/2} + (I - U_B U_B^{\top}) (\hat{B} - B) (\hat{U}_B - U_{B} W^{*}) S_{\hat{B}}^{-1/2} 
          \\
          & = ( \hat{B} - B) U_B S_B^{-1/2} W^{*} 
           - (\hat{B} - B) U_B (S_{B}^{-1/2} W^{*} - W^{*} S_{\hat{B}}^{-1/2}) 
           - U_B U_B^{\top} (\hat{B} - B) U_B W^* S_{\hat{B}}^{-1/2} \\
           & + (I - U_B U_B^{\top}) (\hat{B} - B) (\hat{U}_B - U_{B} W^{*})S_{\hat{B}}^{-1/2}
           \end{split}
        \end{equation*}
        The four terms in the above display correspond to the matrices in Eq.~\eqref{term1} through Eq.~\eqref{term4}.
\end{proof}

 Note that from Lemma 5, we have $\hat{X} {W^*}^{\top} \tilde{W}_n - U_B S_B^{1/2} \tilde{W}_n = (\hat{B} - B) U_B S_B^{-1/2} \tilde{W}_n$ + remaining terms in Eq.~\eqref{term2} through Eq.~\eqref{term6}. The essential term is $(\hat{B} - B) U_B S_B^{-1/2} \tilde{W}_n$ and we analyzed the rows of this matrix in Lemma~\ref{appthm3} below where we show that they converge to multivariate normals. We then show in Lemma~\ref{appthm4} below shows that the rows of the remaining matrices in Eq.~(\ref{term2}) through Eq.~(\ref{term6}), when scaled by $\sqrt{n}$, converge to $0$ in probability. Combining these results yield the proof of Theorem 2. Indeed, the term $\hat{X} {W^*}^{\top} \tilde{W}_n$ can be denoted by $\hat{X} W_n$ for some orthogonal matrix ${W}_n= {W^*}^{\top} \tilde{W}_n$ identical to that in the statement of Theorem 1,2, and 3, while the rows of $U_B S_B^{1/2} \tilde{W}_n$ is, as we observed earlier, simply $(Z_i - \bar{Z})$. 

\begin{lemma}
\label{appthm3}
  Let the rows of $X$: $X_k \stackrel{i.i.d}{\sim}$ F for some sub-Gaussian distribution F. Then there exists a sequence of $d \times d$ orthogonal matrices $\tilde{W}_n$, such that for any fixed index $i$, we have
  $$ \sqrt{n} \tilde{W}_n^{\top} [(\hat{B} - B) U_B S_B^{-1/2}]_{i} \overset{\mathcal{L}}{\to} \mathcal{N}(0, \Sigma(x_i))$$
  where $\Sigma (x_i)= {\Xi}^{-1} \widetilde{\Sigma}(x_i) {\Xi}^{-1}$, 
  $\Xi = \mathbb{E}[X_k{X_k}^\top] \in \mathbb{R}^{d \times d}$, ${\mu} = \mathbb{E}[X_k] \in \R^d.$ and 
  $$\widetilde{\Sigma} (x_i) = \mathbb{E}_{X_k}[(\sigma^2 ||x_i - X_k||^2 + \mathbb{E}[E_{ij}^3] \|x_i - X_k \|+ \frac{1}{4} \mathbb{E}[E_{ij}^4] - \frac{\sigma^4}{4}) (X_k- {\mu}) (X_k - {\mu})^\top] \in \mathbb{R}^{d \times d}$$  is a covariance matrix depending on $x_i$. Here, for ease of notation, we denote by $(A)_i$ or $[A]_i$ the $i$-th row of matrix $A$. 
\end{lemma}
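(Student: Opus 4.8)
The plan is to analyze the single matrix $(\hat{B} - B) U_B S_B^{-1/2}$ directly, exploiting an \emph{exact} algebraic simplification followed by a conditional multivariate central limit theorem. Write $Y = PZ$, so that $Y_j = Z_j - \bar{Z}$ and $B = YY^\top$. Using the factorization $U_B S_B^{1/2} = PZ\tilde{W}_n$ noted before Lemma~\ref{appthm2} together with $Y^\top Y = \tilde{W}_n S_B \tilde{W}_n^\top$, one obtains the exact identity $U_B S_B^{-1/2} = Y(Y^\top Y)^{-1}\tilde{W}_n$; thus, after the orthogonal alignment by $\tilde{W}_n$, the $i$-th row in the statement reduces to $\sqrt{n}\,(Y^\top Y)^{-1} Y^\top (\hat{B}-B) e_i$ with no perturbation-theoretic error incurred at this stage. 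Writing $\hat{B}-B = -\tfrac{1}{2}P\tilde{E}P$ with $\tilde{E} := \Delta^2 - D^2$, i.e. $\tilde{E}_{ji} = 2 D_{ji}E_{ji} + E_{ji}^2$, and using $Y^\top P = Y^\top$ and $Y^\top \bm{1} = 0$, I would collapse the double centering to $Y^\top(\hat{B}-B)e_i = -\tfrac{1}{2}\bigl(\sum_j Y_j \tilde{E}_{ji} - \tfrac{1}{n}\sum_j Y_j \sum_a \tilde{E}_{ja}\bigr)$, so that only the $i$-th column of $\tilde{E}$ and one row-average residual remain.

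Next I would isolate the leading term. Since $\sum_j Y_j = 0$, the mean $\mathbb{E}[\tilde{E}_{ji}] = \sigma^2$ can be subtracted for free, giving $\sum_j Y_j \tilde{E}_{ji} = \sum_j Y_j(\tilde{E}_{ji}-\sigma^2)$; this cancellation is what removes the otherwise $O(\sqrt{n})$ bias generated by the quadratic term $E_{ji}^2$. I would then show the row-average residual is $o_{p}(1)$ after the $n^{-1/2}$ scaling (a second-moment estimate: its mean vanishes against $\sum_j Y_j = 0$, and its fluctuation is $O_{p}(n)$, killed by the $n^{-3/2}$ prefactor), and replace $(Y^\top Y)^{-1}$ by $\tfrac{1}{n}\Xi^{-1}$ and $\bar{Z}$ by $\mu$, both justified by Slutsky together with $\tfrac{1}{n}Y^\top Y \to \Xi$ and $\|\hat{\Sigma}_Z^{-1} - \Xi^{-1}\| = O_{p}(n^{-1/2})$. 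This brings the quantity to $-\tfrac{1}{2\sqrt{n}}\Xi^{-1}\sum_{j\neq i}(Z_j-\mu)(\tilde{E}_{ji}-\sigma^2)$ up to $o_{p}(1)$.

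The central limit step is then clean. Conditionally on $Z_i = x_i$, the summands $(Z_j-\mu)(\tilde{E}_{ji}-\sigma^2)$ are i.i.d.\ across $j$, each a function of the independent pair $(Z_j, E_{ji})$, with conditional mean zero (as $\mathbb{E}[\tilde{E}_{ji}-\sigma^2 \mid Z_j, x_i] = 0$) and common covariance $\mathbb{E}[(\tilde{E}_{ji}-\sigma^2)^2 (Z_j-\mu)(Z_j-\mu)^\top \mid x_i] = 4\widetilde{\Sigma}(x_i)$. The scalar second moment expands as $\mathbb{E}[(2D_{ji}E_{ji}+E_{ji}^2-\sigma^2)^2] = 4D_{ji}^2\sigma^2 + 4D_{ji}\gamma + (\xi-\sigma^4)$, the cross term $4D_{ji}\mathbb{E}[E_{ji}^3] = 4D_{ji}\gamma$ and $\mathbb{E}[(E_{ji}^2-\sigma^2)^2] = \xi-\sigma^4$ being precisely the source of the third- and fourth-moment contributions in $\widetilde{\Sigma}$. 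The classical multivariate Lindeberg--Feller CLT then gives $\tfrac{1}{\sqrt{n}}\sum_j (Z_j-\mu)(\tilde{E}_{ji}-\sigma^2) \overset{\mathcal{L}}{\to} \mathcal{N}(0, 4\widetilde{\Sigma}(x_i))$, the Lindeberg condition following from the sub-Gaussianity of $Z_j$ and $E_{ji}$ (uniformly bounded $(2+\delta)$ moments of the summands); pre-multiplying by $-\tfrac{1}{2}\Xi^{-1}$ yields the claimed $\mathcal{N}(0, \Xi^{-1}\widetilde{\Sigma}(x_i)\Xi^{-1})$.

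The principal obstacle, I expect, is the bookkeeping forced by the quadratic noise $E_{ji}^2$: unlike the linear Model~1, here $\tilde{E}$ is not mean-zero, so one must simultaneously (i) use $\sum_j Y_j = 0$ to cancel the $\sigma^2$ mean before any $O(\sqrt{n})$ bias can appear, (ii) show the double-centering residuals are genuinely negligible despite $E_{ji}^2$ having heavier (sub-exponential) tails and sharing entries across rows, and (iii) carry the random normalization $(Y^\top Y)^{-1}$ through the limit. Controlling these residuals uniformly is where the moment and concentration estimates, and the signal assumption $\max_{i}\sum_j D_{ij}^2 \gg \log^4 n$, do the real work.
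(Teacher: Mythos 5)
Your proposal is correct and follows essentially the same route as the paper's proof: an exact algebraic reduction of $[(\hat{B}-B)U_B S_B^{-1/2}]_i$ via $U_B S_B^{-1/2} = X\tilde{W}_n^\top S_B^{-1}$ (your $Y(Y^\top Y)^{-1}\tilde{W}_n$), cancellation of the $\sigma^2$ bias from $E_{ij}^2$ against the centering, negligibility of the row-average residual, the law of large numbers plus Slutsky for $n\tilde{W}_n^\top S_B^{-1}\tilde{W}_n \to \Xi^{-1}$, and a conditional Lindeberg--Feller CLT with the same second-moment computation $4D_{ji}^2\sigma^2 + 4D_{ji}\gamma + \xi - \sigma^4$ yielding $\widetilde{\Sigma}(x_i)$. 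Your treatment is, if anything, more explicit than the paper's about why the residual centering terms vanish.
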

\begin{proof}
 Recall, since $X = U_B S_B^{1/2} W_n $, we can write, as $n \rightarrow \infty$
    \begin{align*}
        \sqrt{n} \tilde{W}_n^\top [(\hat{B} - B) U_B S_B^{-1/2}]_{i}
        & = \sqrt{n} \tilde{W}_n^\top [(\hat{B} - B)X \tilde{W}_n^\top {S_B}^{-1}]_{i} \\
        & = \sqrt{n} \tilde{W}_n^\top {S_B}^{-1} \tilde{W}_n [(\hat{B} - B)X]_{i}\\
        & = - \sqrt{n} \tilde{W}_n^\top {S_B}^{-1} \tilde{W}_n [P(D \circ E  + \frac{E^2}{2} )PX]_{i}\\
        & = - \sqrt{n} \tilde{W}_n^\top {S_B}^{-1} \tilde{W}_ [ (I - \frac{\boldsymbol{1}\boldsymbol{1}^\top}{n}) (D \circ E  + \frac{E^2}{2}) (I - \frac{\boldsymbol{1}\boldsymbol{1}^\top}{n}) X ]_{i}\\
        & = - \sqrt{n} \tilde{W}_n^\top {S_B}^{-1} \tilde{W}_n [  (I - \frac{\boldsymbol{1}\boldsymbol{1}^\top}{n}) (D \circ E  + \frac{E^2}{2}) (X - \bar{X}) ]_{i}\\
        & = - \sqrt{n} \tilde{W}_n^\top {S_B}^{-1} \tilde{W}_n [ (I - \frac{\boldsymbol{1}\boldsymbol{1}^\top}{n}) ( D \circ E  + \frac{E^2}{2} - \frac{\sigma^2 \boldsymbol{1}\boldsymbol{1}^\top}{2} + \frac{\sigma^2 \boldsymbol{1}\boldsymbol{1}^\top}{2}) (X - \bar{X})]_{i}. \\
        & = - \sqrt{n} \tilde{W}_n^\top {S_B}^{-1} \tilde{W}_n [ (I - \frac{\boldsymbol{1}\boldsymbol{1}^\top}{n}) (D \circ E + \frac{E^2 - \sigma^2 \boldsymbol{1}\boldsymbol{1}^\top}{2})(X - \bar{X})]_{i} \\
    \end{align*}
Note the last equality holds since  $(I-\frac{\boldsymbol{1}\boldsymbol{1}^\top}{n})\frac{\sigma^2\boldsymbol{1}\boldsymbol{1}^\top}{2}(X-\bar{X}) = 0$, 
hence
$$\sqrt{n} \tilde{W}_n^\top [(\hat{B} - B) U_B S_B^{-1/2}]_{i}  = - \sqrt{n} \tilde{W}_n^\top {S_B}^{-1} \tilde{W}_n[ (D \circ E + \frac{E^2 - \sigma^2 \boldsymbol{1}\boldsymbol{1}^\top}{2}) (X- \bar{X})]_{i} $$
as 
$(X - \boldsymbol{1} \mu^{\top} + \boldsymbol{1} \mu^{\top} - \bar{X} )$ has mean 0 and $\frac{\boldsymbol{1}\boldsymbol{1}^\top}{n}(D \circ E + \frac{E^2 - \sigma^2 \boldsymbol{1}\boldsymbol{1}^\top}{2})(X - \boldsymbol{1} \mu^{\top} + \boldsymbol{1} \mu^{\top} - \bar{X}) \xrightarrow{{n \rightarrow \infty}} 0$. We therefore have
    \begin{align*}
        \sqrt{n} \tilde{W}_n^\top [(\hat{B} - B) U_B S_B^{-1/2}]_{i} 
         =& - n \tilde{W}_n^\top {S_B}^{-1} \tilde{W}_n [\frac{1}{\sqrt{n}} ( \sum\limits_{j \neq i}^{n} [(D \circ E + \frac{E^2 - \sigma^2 \boldsymbol{1}\boldsymbol{1}^\top}{2})_{ij} (X - \boldsymbol{1} \mu^{\top})_{j}])\\ 
        & - \frac{1}{\sqrt{n}} (D \circ E + \frac{E^2 - \sigma^2 \boldsymbol{1}\boldsymbol{1}^\top}{2})_{ii} (X- \boldsymbol{1} \mu^{\top})_{i} ].
    \end{align*} 
 
 Note  $\frac{1}{\sqrt{n}} (D \circ E + \frac{E^2 - \sigma^2 \boldsymbol{1}\boldsymbol{1}^\top}{2})_{ii} (X-\boldsymbol{1} \mu^{\top})_{i} \xrightarrow{{n \rightarrow \infty}} 0 $, hence when $n \rightarrow \infty$, the above expression yields:
 
 \begin{equation} \label{eq:6}
  - n \tilde{W}_n^\top {S_B}^{-1} \tilde{W}_n [\frac{1}{\sqrt{n}} ( \sum\limits_{j \neq i}^{n} [(D_{ij} \cdot E_{ij} + \frac{E_{ij}^2 - \sigma^2 \boldsymbol{1}\boldsymbol{1}^\top}{2}) (X_{j} - {\mu}^\top) ] )]
  \end{equation}
Condition on $X_i = x_i$, (\ref{eq:6}) is then the sum of $n-1$ independent mean $0$ random variables, each with covariance matrix given by:
\begin{align*}
\mathrm{Cov}[(E_{ij} \|x_i - X_j\| + \frac{E_{ij}^2 - \sigma^2}{2}) (X_j - {\mu}^\top)]
& = \sum\limits_{j \neq i}^{n} \mathrm{Var}(E_{ij} \|x_i - X_j\| + \frac{E_{ij}^2 - \sigma^2}{2}) (X_j - {\mu}^\top) (X_j - {\mu}^\top)^\top
\end{align*}

We now consider $\mathrm{Var}(E_{ij} \|x_i - X_j\| + (E_{ij}^2 - \sigma^2)/2)$. Since $\mathbb{E}[E_{ij}] = 0$ and $\mathbb{E}[E_{ij}^2] = \sigma^2$, we have
$$
\mathrm{Var}\Bigl(E_{ij} \|x_i - X_j\| + (E_{ij}^2 - \sigma^2)/2\Bigr) = \mathbb{E}\Bigl[ E_{ij}^2 \|x_i - X_j\| + E_{ij} \|x_i - X_j\| (E_{ij}^2 - \sigma^2) + \frac{(E_{ij}^2 - \sigma^2)^2}{4}\Bigr]$$
where the expectation is taken with respect to $E_{ij}$ and conditional on $X_j$.
Hence
$$
\widetilde{\Sigma} (x_i) = \mathbb{E}_{X_k}\Bigl[(\sigma^2 ||x_i - X_k||^2 + \mathbb{E}[E_{ij}^3] \| x_i - X_k\| + \tfrac{1}{4} \mathbb{E}[E_{ij}^4] - \tfrac{\sigma^4}{4}) (X_k- {\mu}) (X_k - {\mu})^\top\Bigr].
$$
Finally, by the strong law of large numbers, we have
  $$\frac{\tilde{W}_n^\top S_B \tilde{W}_n}{n} = \frac{1}{n} X^\top X {\to} \Xi \in \mathbb{R}^{d \times d}$$ almost surely. Hence $(n \tilde{W}_n^{\top} S_B^{-1} \tilde{W}_n) {\to} {\Xi}^{-1}$ almost surely. Slutsky's theorem then yields $$ \sqrt{n} \tilde{W}_n^\top [(\hat{B} - B) U_B S_B^{-1/2}]_{i} \overset{\mathcal{L}}{\to} \mathcal{N}(0, {\Xi}^{-1} \widetilde{\Sigma}(x_i) {\Xi}^{-1})$$
  as desired.  
\end{proof}  

We now look at the matrices in Eq.~(\ref{term2}) through 
Eq.~(\ref{term6}). The following lemma show that any row of these matrices, when scaled by $\sqrt{n}$, will converge to $0$ in probability. 
\begin{lemma}
\label{appthm4}
  We have, simultaneously
  \begin{gather} \label{eq:1}
    \sqrt{n} [(\hat{B} - B) U_B (W^{*} S_{\hat{B}}^{-1/2} - S_B^{-1/2} W^{*})]_{h} \overset{P}{\to} 0 \\
 \label{eq:2}
    \sqrt{n} [ U_B U_B^{\top} (\hat{B} - B) U_B W^{*} S_{\hat{B}}^{-1/2}]_{h} \overset{P}{\to} 0 \\
 \label{eq:3}
    \sqrt{n} [ (I -U_B U_B^{\top}) (\hat{B} - B) (\hat{U}_B - U_B W^*) S_{\hat{B}}^{-1/2}]_{h} \overset{P}{\to} 0 \\
 \label{eq:4}
 \sqrt{n}[U_B (U_B^{\top} U_{\hat{B}} - W^*) S_{\hat{B}}^{1/2}]_{h}  \overset{P}{\to} 0. \\
\label{eq:5}
\sqrt{n}[U_B (W^{*} S_{\hat{B}}^{1/2} - S_B^{1/2} W^{*})]_{h} \overset{P}{\to} 0.
  \end{gather}
\end{lemma}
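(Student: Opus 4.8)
The plan is to show that for each fixed row index $h$ every one of the five matrices in \eqref{eq:1}--\eqref{eq:5} has $h$-th row of Euclidean norm $o(n^{-1/2})$ with probability tending to one, so that after the $\sqrt n$ scaling a vanishing residual remains. All estimates will be made conditionally on $Z_1,\dots,Z_n$, so that $B,U_B,S_B$ and the projector $Q:=I-U_BU_B^{\top}$ are deterministic while $R:=\hat B-B$ has conditionally independent entries; this is what makes a Hanson--Wright argument available. The first step is to assemble the ingredients that feed every bound. (a) Since $\tfrac1n X^{\top}X\to\Xi\succ0$ almost surely (exactly as used in Lemma~\ref{appthm3}), the top $d$ eigenvalues of $B$ and the spectral gap are all $\Theta(n)$. (b) A random-matrix norm bound gives $\|R\|=O(\sqrt n\,\mathrm{polylog}\,n)$: for Model~1 this is $\tfrac12\|E\|$, whereas for Models~2--3 one writes $R=-P(D\circ E+\tfrac12 E\circ E)P$ and notes that double centering annihilates the rank-one $\Theta(n)$ mean $\sigma^2\mathbf 1\mathbf 1^{\top}$ of the Hadamard square, leaving a mean-zero sub-exponential matrix of the stated norm. (c) Incoherence $\|U_B\|_{2\to\infty}=O(\sqrt{\log n/n})$, which follows from $(U_B)_i=(Z_i-\bar Z)^{\top}VS_B^{-1/2}$ and the uniform sub-Gaussian tail $\max_i\|Z_i-\bar Z\|=O(\sqrt{\log n})$. (d) The Davis--Kahan bound $\|U_{\hat B}-U_BW^*\|=O(\|R\|/\lambda_d)=O(n^{-1/2}\,\mathrm{polylog}\,n)$ together with the second-order Procrustes refinement $\|U_B^{\top}U_{\hat B}-W^*\|=O(n^{-1}\,\mathrm{polylog}\,n)$. (e) The key observation that $U_B^{\top}RU_B$ is a $d\times d$ \emph{quadratic form} in the noise, hence controlled by Hanson--Wright to $O(\mathrm{polylog}\,n)$ rather than the naive $O(\sqrt n)$. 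Because $W^*S_{\hat B}-S_BW^*\approx U_B^{\top}RU_B$, (e) yields $\|W^*S_{\hat B}-S_BW^*\|=O(\mathrm{polylog}\,n)$, and solving the Sylvester equation $S_B^{1/2}Y+YS_{\hat B}^{1/2}=W^*S_{\hat B}-S_BW^*$, whose operator has smallest singular value $\gtrsim\sqrt n$, converts this into $\|W^*S_{\hat B}^{1/2}-S_B^{1/2}W^*\|=O(n^{-1/2}\,\mathrm{polylog}\,n)$ and, after conjugating by the inverse square roots, $\|W^*S_{\hat B}^{-1/2}-S_B^{-1/2}W^*\|=O(n^{-3/2}\,\mathrm{polylog}\,n)$.

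Given these estimates, the four terms \eqref{eq:1}, \eqref{eq:2}, \eqref{eq:4} and \eqref{eq:5} are routine. For \eqref{eq:2}, \eqref{eq:4} and \eqref{eq:5}, which all carry a leading factor $U_B$, I would use the elementary submultiplicative bound $\|U_BM\|_{2\to\infty}\le\|U_B\|_{2\to\infty}\|M\|$ to harvest an extra $n^{-1/2}$ from (c): the residual inner factors are $\|U_B^{\top}RU_B\|\,\|S_{\hat B}^{-1/2}\|=O(\mathrm{polylog}\,n)\cdot O(n^{-1/2})$ for \eqref{eq:2}, $\|U_B^{\top}U_{\hat B}-W^*\|\,\|S_{\hat B}^{1/2}\|=O(n^{-1}\,\mathrm{polylog}\,n)\cdot O(\sqrt n)$ for \eqref{eq:4}, and the square-root alignment factor from (e) for \eqref{eq:5}, so that each $h$-th row is $O(\mathrm{polylog}\,n/n)$. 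For \eqref{eq:1}, which has no leading $U_B$, I would instead bound its row by the operator norm and use Observation~\ref{appthm1} together with (e): $\|R\|\cdot\|W^*S_{\hat B}^{-1/2}-S_B^{-1/2}W^*\|=O(\sqrt n)\cdot O(n^{-3/2}\,\mathrm{polylog}\,n)=O(\mathrm{polylog}\,n/n)$. In all four cases $\sqrt n$ times the row norm is $O(\mathrm{polylog}\,n/\sqrt n)\to0$.

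The genuine obstacle is \eqref{eq:3}, namely $(I-U_BU_B^{\top})R(U_{\hat B}-U_BW^*)S_{\hat B}^{-1/2}$. Naive submultiplicativity gives only $\|R\|\cdot\|U_{\hat B}-U_BW^*\|\cdot\|S_{\hat B}^{-1/2}\|=O(\sqrt n)\cdot O(n^{-1/2})\cdot O(n^{-1/2})=O(n^{-1/2})$, exactly the borderline order that $\sqrt n$ cannot beat; worse, this bound is \emph{tight at the operator-norm level}, so bounding the row by the operator norm is hopeless and one must extract genuine per-row cancellation. My plan is to expand the eigenvector difference to leading order, $U_{\hat B}-U_BW^*=QRU_BS_B^{-1}W^*+\mathcal E$ with $\|\mathcal E\|=O(n^{-1}\,\mathrm{polylog}\,n)$ by (d); the $\mathcal E$ part is harmless since $\|QR\mathcal E S_{\hat B}^{-1/2}\|=O(n^{-1}\,\mathrm{polylog}\,n)$ can be absorbed through the operator norm. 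Substituting the leading part reduces \eqref{eq:3}, up to negligible remainders, to the quadratic form $QRQRU_BS_B^{-3/2}W^*$ in the conditionally independent noise $R$. The crucial point is the exact identity $QU_B=0$: acting on the right, it annihilates the dominant isotropic part of operator norm $\Theta(n)$ of the conditional mean $\mathbb E[RQR]$, leaving a conditional mean whose $h$-th row is only $O(n^{-3/2}\,\mathrm{polylog}\,n)$; a conditional Hanson--Wright inequality then controls the fluctuation of this quadratic form about its mean, contributing $O(n^{-1}\,\mathrm{polylog}\,n)$ per row. Hence $\sqrt n$ times the $h$-th row of \eqref{eq:3} is again $O(\mathrm{polylog}\,n/\sqrt n)\to0$, and the hypothesis $\max_i\sum_j D_{ij}^2\gg\log^4 n$ is precisely what absorbs the polylogarithmic factors accumulated across the two quadratic-form concentration steps. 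The entire difficulty of the lemma therefore resides in this single step --- upgrading an operator-norm estimate, which is genuinely $\Theta(n^{-1/2})$, to the sharper per-row bound via the cancellation $QU_B=0$ and Hanson--Wright --- while the other four terms succumb to the submultiplicative toolkit above and are identical in form across the three noise models.
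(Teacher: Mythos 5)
Your treatment of \eqref{eq:1}, \eqref{eq:2}, \eqref{eq:4} and \eqref{eq:5} is essentially the paper's: the same ingredients appear (the $\Omega(n)$ eigenvalue bound of Lemma~\ref{appthm7}, $\|\hat B - B\| = O(\sqrt{n\log n})$ from Proposition~\ref{appthm6}, the Procrustes bound of Proposition~\ref{appthm8}, the $2\to\infty$ bound on $U_B$, and the commutator bounds of Lemma~\ref{appthm9}), with only cosmetic differences --- you invoke Hanson--Wright where the paper uses Bernstein for the bilinear form $U_B^\top(\hat B - B)U_B$, and you derive $\|W^*S_{\hat B}^{1/2}-S_B^{1/2}W^*\|$ via a Sylvester equation where the paper uses an entrywise divided-difference computation; both yield the same orders. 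The genuine divergence is \eqref{eq:3}, and you are right that this is the one term where an operator-norm bound is exactly borderline. The paper resolves it very differently from you: it splits $U_{\hat B}-U_BW^*$ into $(I-U_BU_B^\top)U_{\hat B}$ plus $U_B(U_B^\top U_{\hat B}-W^*)$ (the second piece, like your $\mathcal E$, dies by operator norms), and for the first piece $H_1$ it uses an \emph{exchangeability} argument: since the $Z_i$ are i.i.d., the rows of $H_1$ are exchangeable, so $n\,\mathbb{E}\|(H_1)_h\|^2=\mathbb{E}\|H_1\|_F^2$, and the Frobenius bound $\|H_1\|_F=O(\sqrt{\log n})$ plus Markov converts directly into a per-row rate. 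This is much shorter than your route (first-order eigenvector expansion, reduction to the quadratic form $QRQRU_BS_B^{-3/2}$, mean/fluctuation split, conditional Hanson--Wright), but it leans on the i.i.d.\ structure of the sample; your argument is more laborious and would require verifying the Hanson--Wright variance computation row by row, but it is deterministic in the $Z_i$ given the stated norm bounds and so is in principle more robust. Both are viable; yours is in the style of entrywise eigenvector perturbation analyses rather than the paper's probabilistic symmetrization.

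One small caveat on your mechanism for the mean of the quadratic form: $\mathbb{E}[RQR]$ is not isotropic --- under the independence and symmetry of the noise it is essentially diagonal with entries $c_i=\Theta(n)$ that vary with $i$ --- so the identity $QU_B=0$ does not by itself annihilate it. What actually saves the mean term is that $[\,\mathrm{diag}(c)\,U_BS_B^{-3/2}]_h = c_h (U_B)_h S_B^{-3/2}$ has norm $O(n)\cdot O(n^{-1/2})\cdot O(n^{-3/2})=O(n^{-1})$, i.e.\ the $2\to\infty$ incoherence of $U_B$ combined with the $S_B^{-3/2}$ factor. The order comes out the same, so this is a misattributed cancellation rather than a gap, but you should state the correct reason if you write the argument out.
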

The rest of this Appendix is devoted toward proving Lemma \ref{appthm4}, for which we need the following technical lemmas controlling the spectral norm of $\|\hat{B} - B\|$ and $\|U_B^{\top} \hat{U}_B - W^{*}\|$ (recall that $W^*$ is the closest orthogonal matrix, in Frobenius norm, to $U_B^{\top} \hat{U}_B$.) 
We start with a bound for the spectral norm of $B - \hat{B}$. 

\begin{proposition}
\label{appthm6}
  $\|B - \hat{B}\| = \mathcal{O}(\sqrt{n \log n})$ with high probability.
\end{proposition}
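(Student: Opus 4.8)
The plan is to reduce the spectral-norm bound to a concentration statement for two random matrices with independent entries and then apply a matrix Bernstein--type inequality to each. Since $B - \hat{B} = -\tfrac{1}{2} P(D^2 - \Delta^2) P$ and $\|P\| = 1$, it suffices to bound $\|P(D^2 - \Delta^2)P\|$. For the model $\Delta = D + E$ the entrywise identity $\Delta^2 = D^2 + 2\,D\circ E + E\circ E$ gives
\[
B - \hat{B} = P(D\circ E)P + \tfrac{1}{2} P(E\circ E)P,
\]
so by the triangle inequality it is enough to control $\|P(D\circ E)P\| \le \|D\circ E\|$ and $\|P(E\circ E)P\|$ separately. (For Model~1 the same argument applies with $D^2 - \Delta^2 = -E$ and is strictly easier, while Model~3 is the special case of $\Delta = D+E$ with the prescribed Bernoulli $E$.)

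First I would handle the nonzero mean of $E\circ E$. Because $E$ is hollow with $\mathrm{Var}(E_{ij}) = \sigma^2$, we have $\mathbb{E}[E\circ E] = \sigma^2(\bm{1}\bm{1}^\top - I)$. The rank-one part is annihilated by the centering matrix since $P\bm{1} = 0$, so $P(E\circ E)P = P\bigl(E\circ E - \mathbb{E}[E\circ E]\bigr)P - \sigma^2 P$, and therefore $\|P(E\circ E)P\| \le \|E\circ E - \mathbb{E}[E\circ E]\| + \sigma^2$. This is the key place where the double centering pays off: it removes the $\Theta(n)$-norm contribution of the mean and leaves a genuinely mean-zero matrix to which concentration applies.

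Next I would apply matrix Bernstein to each mean-zero symmetric matrix, written as a sum of independent rank-at-most-two terms $M = \sum_{i<j} M_{ij}(e_i e_j^\top + e_j e_i^\top)$. For $D\circ E$ the summand has $M_{ij} = D_{ij}E_{ij}$, and since $(e_i e_j^\top + e_j e_i^\top)^2 = e_i e_i^\top + e_j e_j^\top$ the matrix variance proxy is diagonal, giving $\bigl\|\sum_{i<j}\mathbb{E}[M_{ij}^2](e_i e_i^\top + e_j e_j^\top)\bigr\| = \sigma^2 \max_i \sum_{j\neq i} D_{ij}^2 =: v$, while the per-term bound is $R = \max_{ij} D_{ij}|E_{ij}|$. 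Matrix Bernstein then yields $\|D\circ E\| \lesssim \sqrt{v\log n} + R\log n$. An identical computation for $E\circ E - \mathbb{E}[E\circ E]$, whose entries $E_{ij}^2 - \sigma^2$ are mean-zero sub-exponential with variance $\xi - \sigma^4 = \mathcal{O}(1)$, produces $v = \mathcal{O}(n)$ via the same diagonal-variance structure.

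The hard part will be controlling the two data-dependent, unbounded quantities that enter these bounds. I would show (i) $\max_i\sum_{j\neq i} D_{ij}^2 = \mathcal{O}(n)$ with high probability: conditioning on $Z_i$, the inner sum is a sum of $n-1$ independent terms each with bounded mean $\mathbb{E}\|Z_i - Z_j\|^2$, so it concentrates around $\Theta(n)$ and a union bound over the $n$ rows keeps the maximum at $\mathcal{O}(n)$; and (ii) the truncation bound $R = \mathcal{O}(\log n)$: since the $Z_k$ are sub-Gaussian, $\max_{ij} D_{ij} = \mathcal{O}(\sqrt{\log n})$, and since the $E_{ij}$ are sub-Gaussian, $\max_{ij}|E_{ij}| = \mathcal{O}(\sqrt{\log n})$ (with $\max_{ij}|E_{ij}^2 - \sigma^2| = \mathcal{O}(\log n)$ for the sub-exponential entries of $E\circ E$), all with high probability. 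Feeding $v = \mathcal{O}(n)$ and $R = \mathcal{O}(\log n)$ into the Bernstein bound gives $\sqrt{v\log n} + R\log n = \mathcal{O}(\sqrt{n\log n}) + \mathcal{O}(\log^2 n) = \mathcal{O}(\sqrt{n\log n})$, and combining the two matrices completes the proof. The one subtlety requiring care is that matrix Bernstein nominally assumes a uniform almost-sure bound on the summands; I would either invoke the sub-exponential version of the inequality directly, or truncate the entries at the $\mathcal{O}(\log n)$ level and argue that the spectral norm of the discarded remainder is negligible, since the truncation probability is polynomially small in $n$.
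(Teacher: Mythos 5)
Your proposal is correct and follows essentially the same route as the paper: decompose $B-\hat{B}$ into $P(D\circ E)P$ plus $\tfrac{1}{2}P(E\circ E)P$, use $P\bm{1}=0$ to kill the $\Theta(n)$ mean of $E\circ E$, and apply matrix concentration to each mean-zero piece. The only difference is cosmetic --- the paper invokes a sub-Gaussian spectral-norm bound for $D\circ E$ and matrix Bernstein for the sub-exponential term, whereas you run matrix Bernstein on both with explicit variance proxies and a truncation argument, which is if anything more careful about the unbounded summands than the paper's own proof.
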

\begin{proof}
 We have
  \begin{align*}
    \|B - \hat{B}\| & = \| -\frac{1}{2} P D^2 P +  \frac{1}{2} P (D+E)^2 P\|\\
    & = \|P D \circ E P + \frac{1}{2} P E^2 P\| \textrm{ (where $\circ$ is the Hadamard product)}\\      
    & \leq \|D \circ E\| + \frac{1}{2} \| E^2 - \mathbb{E} [E^2] \| \textrm{ (since $\|P\| = 1.)$ }\\
    & = \mathcal{O}(\sqrt{n}) + \mathcal{O}(\sqrt{n \log n})
  \end{align*}
  Note that here we used $\mathbb{E}[D \circ E] = 0$ and $\mathbb{E}[\frac{1}{2}P E^2 P ] = 0$. Each entries of $D \circ E$ is of sub-Gaussian distribution with mean $0$ and each entries of $E^2 - \mathbb{E} [E^2]$ is of sub-exponential distribution with mean $0$. An application of Theorem 4.4.5 in \cite{HDP} and Matrix Bernstein for the sub-exponential case in \cite{tropp2012user} gives the desired result.
\end{proof}

\begin{lemma}
\label{appthm7}
  Let $X_1, \ldots, X_n, Y \stackrel{i.i.d}{\sim} F$ for some sub-Gaussian distribution $F$, where $X_i$ is the $i$th row of the configuration matrix $X$ of $B$ viewed as a column vector. Let $\Xi = \mathbb{E}[X_1 {X_1}^\top]$ be of rank $d$, then $\lambda_i(B) = \Omega(n)$ almost surely.
\end{lemma}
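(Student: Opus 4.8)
The plan is to exploit the low-rank structure of $B$ in order to reduce the claim to a statement about a fixed-dimensional random matrix, and then invoke the strong law of large numbers together with a perturbation bound for eigenvalues. Recall from the discussion preceding Lemma~\ref{appthm3} that, when $D$ is the Euclidean distance matrix of $Z_1,\dots,Z_n$, we have $B = P Z Z^\top P = (PZ)(PZ)^\top$, where $Z \in \R^{n\times d}$ has i.i.d.\ rows $Z_i \sim F$ and $P = I - \tfrac{\bm 1\bm 1^\top}{n}$ satisfies $P^2 = P = P^\top$. Since the nonzero eigenvalues of $MM^\top$ and $M^\top M$ coincide, the (generically $d$) nonzero eigenvalues of $B$ are exactly the eigenvalues of the $d\times d$ Gram matrix
$$ G_n := (PZ)^\top(PZ) = Z^\top P Z = \sum_{i=1}^n Z_i Z_i^\top - n\,\bar Z\bar Z^\top . $$
It therefore suffices to show that every eigenvalue of $G_n$ is $\Omega(n)$ almost surely.

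First I would normalize by $n$ and apply the strong law of large numbers. Writing $\tfrac{1}{n}G_n = \tfrac{1}{n}\sum_{i=1}^n Z_i Z_i^\top - \bar Z\bar Z^\top$, the sub-Gaussian hypothesis guarantees finite second moments, so $\tfrac1n\sum_i Z_iZ_i^\top \to \mathbb{E}[Z_1 Z_1^\top]$ and $\bar Z \to \mu := \mathbb{E}[Z_1]$ almost surely; as these are $d\times d$ objects with $d$ fixed, entrywise convergence is the same as convergence in operator norm. Consequently
$$ \tfrac{1}{n} G_n \xrightarrow{\text{a.s.}} \mathbb{E}[Z_1 Z_1^\top] - \mu\mu^\top = \mathrm{Cov}(Z_1). $$
Because the configuration rows are $X_i = \tilde W_n^\top(Z_i - \bar Z)$, which differ from $\tilde W_n^\top(Z_i - \mu)$ by a term vanishing as $n\to\infty$, the hypothesis $\mathrm{rank}(\Xi)=d$ is equivalent to $\mathrm{Cov}(Z_1)$ having full rank $d$; in particular $\lambda_{\min}(\mathrm{Cov}(Z_1)) =: c > 0$.

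Finally I would transfer matrix convergence to eigenvalue convergence. By Weyl's inequality the map $A \mapsto \lambda_i(A)$ on symmetric matrices is $1$-Lipschitz in operator norm, so $\tfrac1n\lambda_i(G_n) = \lambda_i(\tfrac1n G_n) \to \lambda_i(\mathrm{Cov}(Z_1)) \ge c > 0$ almost surely for each $i \in \{1,\dots,d\}$. Hence, on a probability-one event, $\lambda_i(B) = \lambda_i(G_n) \ge \tfrac{c}{2}\,n$ for all sufficiently large $n$, which is precisely $\lambda_i(B) = \Omega(n)$ almost surely.

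The hard part will not be any single step, each of which is routine, but rather the bookkeeping that justifies the reduction and pins down the limiting matrix. The only genuinely delicate point is relating $\Xi = \mathbb{E}[X_1 X_1^\top]$, defined through the centered and orthogonally transformed configuration rows, to the fixed matrix $\mathrm{Cov}(Z_1)$ whose rank we actually control: one must note that the $n$-dependent centering $\bar Z$ and the orthogonal factor $\tilde W_n$ leave the spectrum of $B$ unchanged and wash out in the almost-sure limit. Once this identification is secured, the strict positivity of $\lambda_{\min}(\mathrm{Cov}(Z_1))$ supplied by the rank-$d$ assumption forces the linear growth of every nonzero eigenvalue of $B$.
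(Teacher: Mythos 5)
Your proof is correct, but it takes a genuinely different route from the paper's. Both arguments begin with the same reduction --- the nonzero eigenvalues of $B=(PZ)(PZ)^\top$ coincide with those of the $d\times d$ Gram matrix $Z^\top PZ$ (the paper phrases this as $\lambda_i(XX^\top)=\lambda_i(X^\top X)$) --- but they diverge afterward. You normalize by $n$, invoke the strong law of large numbers to get $\tfrac1n G_n\to\mathrm{Cov}(Z_1)$ almost surely, and transfer this to eigenvalues via Weyl's inequality; this is the more elementary path, it delivers the almost-sure conclusion directly, and you are right to flag (and correctly resolve) the only delicate point, namely identifying $\Xi$ with $\mathrm{Cov}(Z_1)$ despite the $n$-dependent centering and orthogonal factor. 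The paper instead proves a non-asymptotic concentration bound: a sub-Gaussian Hoeffding inequality entrywise plus a union bound over the $d^2$ entries gives $\|X^\top X-n\,\mathbb{E}[YY^\top]\|_F\leq Cd\sqrt{n\log n}$ with high probability, and then the Hoffman--Wielandt inequality yields $|\lambda_i(XX^\top)-n\lambda_i(\Xi)|\leq Cd\sqrt{n\log n}$, whence $\lambda_d(XX^\top)\geq n\lambda_d(\Xi)-Cd\sqrt{n\log n}=\Omega(n)$. What the paper's version buys is an explicit $\sqrt{n\log n}$ deviation rate with a quantified failure probability, matching the style of the surrounding high-probability estimates (e.g., the bound on $\|B-\hat B\|$) and convertible to an almost-sure statement by Borel--Cantelli; what your version buys is brevity and a cleaner almost-sure limit with no concentration machinery, at the cost of losing the rate. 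Either suffices for the downstream uses of $\lambda_d(B)=\Omega(n)$ in the Davis--Kahan steps.
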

\begin{proof}
  For any matrix $H$, the nonzero eigenvalues of $H^\top H$ are the same as those $HH^\top $, so $\lambda_i(XX^\top) = \lambda_i(X^\top X)$. In what follows, we remind the reader that $X$ is a matrix whose rows are the transposes of the column vectors $X_i$, and $Y$ is a d-dimensional vector that is independent from and has the same distribution as that of the $X_i$.
  We observe that $(X^\top X - n \mathbb{E}[YY^\top] )_{ij} = \sum\limits_{k=1}^{n} (X_{ki}X_{kj} - \mathbb{E}[Y_iY_j])$ is a sum of $n$ independent mean-zero sub-Gaussian random variables.
By a general Hoeffding's inequality for sub-gaussian random variables \citep{HDP}, for all $i, j \in [d]$, 
$$\mathbb{P}[|(X^\top X - n\mathbb{E}[YY^\top])_{ij}|  \geq t]  \leq 2 \exp\{ \frac{-ct^2}{nM} \},$$  where $M = \max\limits_{k} 
\|(X_{ki}X_{kj} - \mathbb{E}[Y_iY_j] )\|_{\varphi_2}^2$. Therefore,
$$ \mathbb{P}[|(X^\top X - n\mathbb{E}[YY^\top])_{ij}| \geq C\sqrt{n\log n}]  \leq 2 n^{\frac{-2C^2}{M^2}}.$$
A union bound over all $i, j \in [d]$ implies that $\|X^\top X - n\mathbb{E}[YY^\top ]\|_{F}^2 \leq C^2 d^2 n \log n$ with probability at least 
$1 - 2 n^{-2C^2/M^2}$, i.e. $\|X^\top X - n\mathbb{E}[YY^\top ] \|_{F} \leq C d \sqrt{n \log n }$ with high probability for any $C > \frac{M}{\sqrt{2}}.$
By the Hoffman-Wielandt inequality, $|\lambda_i(XX^\top) - n \lambda_i(\mathbb{E}[YY^\top ])| \leq C d \sqrt{n \log n}$, and by reverse triangle inequality, we obtain $$\lambda_i(XX^\top ) \geq \lambda_d(XX^\top) \geq | n \lambda_d(\Xi) | - C d \sqrt{n \log n} = \Omega(n)$$ holds almost surely.
\end{proof}

\begin{proposition}
\label{appthm8}
  Let $W_{1} \Sigma {W_2}^{T}$ be the singular value decomposition of $U_B^{\top} U_{\hat{B}}$, then with high probability, $\|U_B^{\top} U_{\hat{B}} - {W_1} {W_2}^\top\| = \mathcal{O}(n^{-1} \log n)$.
\end{proposition}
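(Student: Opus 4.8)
The plan is to translate the quantity $\|U_B^{\top} U_{\hat B} - W^*\|$, where $W^* = W_1 W_2^{\top}$ is the orthogonal polar factor of $U_B^{\top} U_{\hat B}$, into a statement about the principal angles between the column spaces of $U_B$ and $U_{\hat B}$, and then to bound those angles via a Davis--Kahan $\sin\Theta$ argument using the ingredients already established. Writing the SVD as $U_B^{\top} U_{\hat B} = W_1 \Sigma W_2^{\top}$ with $\Sigma = \mathrm{diag}(\cos\theta_1, \dots, \cos\theta_d)$, the singular values of $U_B^{\top} U_{\hat B}$ are exactly the cosines of the principal angles $\theta_1, \dots, \theta_d$ between $\mathrm{col}(U_B)$ and $\mathrm{col}(U_{\hat B})$. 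Since $W^* = W_1 W_2^{\top}$ replaces each singular value by $1$, we have $U_B^{\top} U_{\hat B} - W^* = W_1(\Sigma - I)W_2^{\top}$ and hence $\|U_B^{\top} U_{\hat B} - W^*\| = \max_i (1 - \cos\theta_i)$.

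The decisive observation is the elementary identity $1 - \cos\theta_i = \sin^2\theta_i / (1 + \cos\theta_i) \leq \sin^2\theta_i$; this is precisely what upgrades an $\mathcal O(\sqrt{\log n / n})$ bound on the angles themselves into the claimed $\mathcal O(n^{-1}\log n)$ rate. It gives $\|U_B^{\top} U_{\hat B} - W^*\| \leq \max_i \sin^2\theta_i = \|\sin\Theta\|^2$, where $\sin\Theta$ is the diagonal matrix of the $\sin\theta_i$ and $\|\sin\Theta\| = \|(I - U_B U_B^{\top})U_{\hat B}\|$. I would then invoke the Davis--Kahan theorem to control $\|\sin\Theta\|$. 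Because $B = P Z Z^{\top} P$ has rank $d$, its $(d+1)$-st eigenvalue is $0$, so the relevant spectral gap separating the top-$d$ eigenspace from the remainder is $\lambda_d(B) - \lambda_{d+1}(B) = \lambda_d(B)$, and Davis--Kahan yields $\|\sin\Theta\| \leq 2\|\hat B - B\| / \lambda_d(B)$ (a Weyl-inequality check confirms that perturbing the spectrum by $\|\hat B - B\| = o(n)$ leaves the gap of order $n$ intact).

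It then remains to substitute the two quantitative estimates already available: Proposition~\ref{appthm6} gives $\|\hat B - B\| = \mathcal O(\sqrt{n\log n})$ with high probability, and Lemma~\ref{appthm7} gives $\lambda_d(B) = \Omega(n)$ almost surely. Combining,
\[
\|U_B^{\top} U_{\hat B} - W^*\| \le \|\sin\Theta\|^2 \le \Bigl(\frac{2\|\hat B - B\|}{\lambda_d(B)}\Bigr)^2 = \Bigl(\frac{\mathcal O(\sqrt{n\log n})}{\Omega(n)}\Bigr)^2 = \mathcal O\!\left(\frac{\log n}{n}\right),
\]
which is the assertion. I expect the main obstacle to be bookkeeping rather than conceptual: one must verify that $\|\hat B - B\|$ is small enough relative to the gap for the $\sin\Theta$ bound to be meaningful (guaranteed for large $n$ since $\sqrt{n\log n} \ll n$), and confirm that $W_1 W_2^{\top}$ really is the minimizer defining $W^*$ --- this holds because the orthogonal polar factor is the closest orthogonal matrix in every unitarily invariant norm, so it coincides with the Frobenius-norm minimizer used throughout. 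The genuinely load-bearing idea is the $\sin^2$ estimate of the middle paragraph; everything else is the assembly of previously proved bounds.
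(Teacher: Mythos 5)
Your proof is correct and follows essentially the same route as the paper: both identify the singular values of $U_B^{\top}U_{\hat{B}}$ with cosines of principal angles, use the elementary fact that $1-\cos\theta_i$ is bounded by $\sin^2\theta_i$ (the paper writes this as $1-\sigma_i \le 1-\sigma_i^2$), and then square the Davis--Kahan bound $\|\sin\Theta\| \le C\|\hat{B}-B\|/\lambda_d(B) = \mathcal{O}(\sqrt{\log n / n})$ obtained from Proposition~\ref{appthm6} and Lemma~\ref{appthm7}. The only cosmetic difference is that the paper bounds the Frobenius norm via $\sum_i(1-\sigma_i)$ while you bound the spectral norm via $\max_i(1-\cos\theta_i)$, which is immaterial for fixed $d$.
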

\begin{proof}
  Let $\sigma_1, \sigma_2, \ldots, \sigma_d$ be the singular values of $U_B^{\top} U_{\hat{B}}$ (the diagonal entries of $\Sigma$). Then  $\sigma_i = \cos(\theta_i)$ where $\theta_i$'s are the principal angles between the subspace spanned by $U_B$ and $U_{\hat{B}}$. The Davis-Kahan $\sin(\Theta)$  theorem \citep{D-K} gives
   
  $$\|U_{\hat{B}} U_{\hat{B}}^{\top} - U_B U_B^{\top}\| = \max\limits_{i} |\sin(\theta_i)| \leq \frac{C \|B - \hat{B}\|}{\lambda_d (B)} = \mathcal{O}(\sqrt {\frac{\log n}{n}})$$ 
  
  for sufficiently large $n$. Note in the last equality we used the previous two lemmas.
  Thus,
  \begin{align*}
    || U_B^{\top} U_{\hat{B}} - W_1 {W_2}^\top||_F
    & = || \Sigma - I ||_F = \sqrt{ \sum\limits_{i=1}^{d} (1- \sigma_i)^2 } \leq \sum\limits_{i=1}^{d} (1- \sigma_i) \leq \sum\limits_{i=1}^{d} (1- {\sigma_i}^2) \\
    &  = \sum\limits_{i=1}^{d} {\sin(\theta_i)}^2 \leq d || U_{\hat{B}} U_{\hat{B}}^{\top} - U_B U_B^{\top}||^{2} = \mathcal{O}(\frac{\log n}{n})
   \end{align*}
\end{proof}

Recall that a random vector $X$ is sub-exponential if $\mathbb{P}[ | X | > t] \leq 2 e^{ -\frac{t}{ {K}} }$ for some constant $K$ and for all $t \geq 0$. Associated with a sub-exponential random variable there is a Orlicz norm defined as $ \| X \|_{\psi_1} = \inf \{ t >0 : \mathbb{E} \exp(\frac{|X| }{t}) \leq 2 \}$. Furthermore, a random variable $X$ is sub-Gaussian if and only if $X^2$ is sub-exponential, and $ \| X^2\|_{\psi_1} = \|X\|_{\psi_2}^2 $. We now have the following lemma which allows us to juxtapose the ordering in the matrix product $W^{*} \hat{S}_B$ and $S_B W^{*}$ (and similarly $W^{*} \hat{S}_B^{1/2}$ and $S_B^{1/2} W^{*}$.) This juxtaposition is essential in showing Eq.~\eqref{eq:1} and Eq.~\eqref{eq:5} in Lemma~\ref{appthm4}. 

\begin{lemma}
\label{appthm9}
  Let $W^{*} = W_1{W_2}^\top$. Then with high probability,
  $$\|W^{*}S_{\hat{B}} - S_{B}W^{*}\|_{F} = \mathcal{O}(\log n); \quad \text{and} \quad \|W^{*} S_{\hat{B}}^{1/2} - S_{B}^{1/2} W^{*}\|_{F} = \mathcal{O}(n^{-\frac{1}{2}} \log n).$$
\end{lemma}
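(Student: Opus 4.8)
The plan is to reduce both statements to a bound on the $d\times d$ matrix $U_B^{\top}(\hat B - B)U_{\hat B}$, and then to handle the square-root version through a Sylvester-type identity. Write $M := U_B^{\top}U_{\hat B}$. Starting from $\hat B U_{\hat B} = U_{\hat B}S_{\hat B}$ and, using symmetry of $B$, $U_B^{\top}B = S_B U_B^{\top}$, left-multiplying $\hat B - B$ by $U_B^{\top}$ and right-multiplying by $U_{\hat B}$ yields the exact identity $M S_{\hat B} - S_B M = U_B^{\top}(\hat B - B)U_{\hat B}$. Proposition~\ref{appthm8} gives $\|M - W^{*}\|_{F} = \mathcal{O}(n^{-1}\log n)$, while Lemma~\ref{appthm7}, Proposition~\ref{appthm6}, and Weyl's inequality give $\|S_B\| = \mathcal{O}(n)$ and $\|S_{\hat B}\| \le \|S_B\| + \|\hat B - B\| = \mathcal{O}(n)$. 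Since
$$ W^{*}S_{\hat B} - S_B W^{*} = \bigl(M S_{\hat B} - S_B M\bigr) + (W^{*}-M)S_{\hat B} - S_B(W^{*}-M), $$
the last two terms are each $\mathcal{O}(n^{-1}\log n)\cdot\mathcal{O}(n) = \mathcal{O}(\log n)$ in Frobenius norm, so the first claim reduces to showing $\|U_B^{\top}(\hat B - B)U_{\hat B}\|_{F} = \mathcal{O}(\log n)$.

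Next I would split $U_{\hat B} = U_B W^{*} + (U_{\hat B} - U_B W^{*})$. By Observation~\ref{appthm1}, Proposition~\ref{appthm6}, and the $\sin\Theta$ bound in the proof of Proposition~\ref{appthm8} (which yields $\|U_{\hat B} - U_B W^{*}\|_{F} = \mathcal{O}(\sqrt{\log n / n})$), the cross term obeys $\|U_B^{\top}(\hat B - B)(U_{\hat B}-U_B W^{*})\|_{F} \le \|\hat B - B\|\,\|U_{\hat B}-U_B W^{*}\|_{F} = \mathcal{O}(\sqrt{n\log n})\cdot\mathcal{O}(\sqrt{\log n/n}) = \mathcal{O}(\log n)$. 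Because $W^{*}$ is orthogonal, it then remains to bound $U_B^{\top}(\hat B - B)U_B$. Using $P U_B = U_B$ and $\hat B - B = -P(D\circ E)P - \tfrac12 P(E\circ E)P$ (with $E\circ E$ the entrywise square), this equals $-U_B^{\top}(D\circ E)U_B - \tfrac12 U_B^{\top}(E\circ E)U_B$.

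The main obstacle is bounding these two quadratic forms at the refined order $\mathcal{O}(\log n)$ rather than the crude $\mathcal{O}(\sqrt{n\log n})$ coming from $\|\hat B - B\|$; this is where the projection onto the $d$-dimensional eigenspace of $B$ must be exploited. The plan is to condition on $Z_1,\dots,Z_n$ (hence on $D$ and on the columns $u_a$ of $U_B$), so that $u_a^{\top}(D\circ E)u_b = \sum_{k<l}(u_{ak}u_{bl}+u_{al}u_{bk})D_{kl}E_{kl}$ is a sum of independent mean-zero sub-Gaussian terms. Orthonormality of the $u_a$ gives $\sum_{k<l}(u_{ak}u_{bl}+u_{al}u_{bk})^{2}\le 2$, so the conditional variance proxy is $\mathcal{O}(\max_{k,l}D_{kl}^{2})$; since $\max_{k,l}D_{kl}^{2}=\mathcal{O}(\log n)$ for sub-Gaussian $Z_k$, a sub-Gaussian tail bound gives $u_a^{\top}(D\circ E)u_b=\mathcal{O}(\log n)$ with high probability. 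For the second form, centering by $\mathbb{E}[E\circ E]=\sigma^{2}(\bm1\bm1^{\top}-I)$—whose contribution is only $-\sigma^{2}\mathbf{1}\{a=b\}$ because $U_B\perp\bm1$—leaves a sub-exponential quadratic form controlled by Bernstein's inequality at order $\mathcal{O}(\sqrt{\log n})$. As $d$ is fixed, summing over the finitely many entries gives $\|U_B^{\top}(\hat B - B)U_B\|_{F}=\mathcal{O}(\log n)$, which establishes the first claim.

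Finally, for the square-root statement I set $G := W^{*}S_{\hat B}^{1/2}-S_B^{1/2}W^{*}$ and verify, by direct expansion, the Sylvester-type identity $S_B^{1/2}G + G S_{\hat B}^{1/2} = W^{*}S_{\hat B}-S_B W^{*}$. Reading this entrywise and using that $S_B^{1/2}$ and $S_{\hat B}^{1/2}$ are diagonal gives $G_{ij}=\bigl(W^{*}S_{\hat B}-S_B W^{*}\bigr)_{ij}\big/\bigl((S_B)_{ii}^{1/2}+(S_{\hat B})_{jj}^{1/2}\bigr)$, where $(S_B)_{ii}, (S_{\hat B})_{jj}=\Omega(n)$ by Lemma~\ref{appthm7} and Weyl's inequality, so each denominator is $\Omega(\sqrt n)$. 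Dividing the $\mathcal{O}(\log n)$ numerator (from the first claim) by $\Omega(\sqrt n)$ and summing over the $d^2$ entries yields $\|G\|_{F}=\mathcal{O}(n^{-1/2}\log n)$, as desired.
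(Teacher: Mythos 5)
Your proposal is correct and follows essentially the same route as the paper: reduce $W^{*}S_{\hat{B}}-S_BW^{*}$ to the commutator identity involving $U_B^{\top}(\hat{B}-B)U_{\hat{B}}$, peel off the $\mathcal{O}(n^{-1}\log n)$ Procrustes error against the $\mathcal{O}(n)$ eigenvalues and the $\mathcal{O}(\sqrt{\log n/n})$ subspace residual, bound the core term $U_B^{\top}(\hat{B}-B)U_B$ by concentration of quadratic forms, and obtain the square-root bound from the same entrywise quotient with denominators $\Omega(\sqrt{n})$. Your treatment of $U_B^{\top}(\hat{B}-B)U_B$ (splitting the linear-in-$E$ sub-Gaussian part from the centered sub-exponential $E\circ E$ part) is somewhat more explicit than the paper's single appeal to Bernstein's inequality, but the argument is the same in substance.
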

\begin{proof}
  Let $R = U_{\hat{B}} - U_BU_B^{\top} U_{\hat{B}}.$ Note $R$ is the residual after projecting $U_{\hat{B}}$ orthogonally onto the column space of  $U_{B}$, and thus $\|U_{\hat{B}} - U_B U_B^{\top} U_{\hat{B}}\|_{F} \leq \min\limits_{W} \|U_{\hat{B}} - U_B W \|_{F}$ where the minimization is over all orthogonal matrices $W$. 
  By a variant of the Davis-Kahan $\sin \Theta$ theorem \citep{vD-K}, we have
   $$\min\limits_{W}\|U_BW - U_{\hat{B}} \|_{F} \leq \frac{C \sqrt{d} \|B - \hat{B}\|}{\lambda_d(B)} ,$$ and hence $\|R\|_F \leq \mathcal{O}(\sqrt{\frac{\log n}{n}}).$
   Now consider 
   \begin{align*}
    W^{*}S_{\hat{B}}
    & = (W^{*} - U_B^{\top} U_{\hat{B}}) S_{\hat{B}} + U_B^{\top} U_{\hat{B}} S_{\hat{B}} \\
    & = (W^{*} - U_B^{\top} U_{\hat{B}}) S_{\hat{B}} + U_B^{\top} \hat{B} U_{\hat{B}}  \\
    & = (W^{*} - U_B^{\top} U_{\hat{B}}) S_{\hat{B}} + U_B^{\top} (\hat{B} - B) U_{\hat{B}} + U_B^{\top} B U_{\hat{B}} \\
    & = (W^{*} - U_B^{\top} U_{\hat{B}}) S_{\hat{B}} +  U_B^{\top} (\hat{B} - B) R + U_B^{\top} (\hat{B} - B) U_B U_B^{\top} U_{\hat{B}} + S_B U_B^{\top} U_{\hat{B}}.
   \end{align*}
   Note here we use the fact $U_{\hat{B}} S_{\hat{B}} = \hat{B} U_{\hat{B}}.$
   Now write $$S_B U_B^{\top} U_{\hat{B}} = S_B(U_B^{\top} U_{\hat{B}} - W^{*}) + S_B W^{*},$$
   then we have $$W^{*}S_{\hat{B}} - S_B W^{*}  = (W^{*} - U_B^{\top} U_{\hat{B}}) S_{\hat{B}} + U_B^{\top} (\hat{B} -B) R + U_B^{\top} (\hat{B} -B) U_B U_B^{\top} U_{\hat{B}} + S_B (U_B^{\top} U_{\hat{B}} - W^{*}).$$
   This gives 
   \begin{align*}
     \begin{array}{rl}
      \|W^{*}S_{\hat{B}} - S_B W^{*}\|_{F} &
      \leq \| (U_B^{\top} U_{\hat{B}} - W^{*}) (S_{\hat{B}} + S_B) \|_{F} + \| U_B^{\top} (\hat{B} - B) R \|_{F} + \| U_B^{\top} (\hat{B} - B) U_B U_B^{\top} U_{\hat{B}}\|_{F} \\
      & \leq  \| (U_B^{\top} U_{\hat{B}} - W^{*})\|_{F}  (\|S_{\hat{B}}\| + \|S_B\|) + \| U_B^{\top} (\hat{B} - B) R \|_{F} + \| U_B^{\top} (\hat{B} - B) U_B U_B^{\top} U_{\hat{B}}\|_{F} \\
      & \leq  \|W_1 W_2^{\top} - U_B^{\top} {U_{\hat{B}}}\|_{F} (\mathcal{O}(n) + \mathcal{O}(n)) + \| U_B^{\top} (\hat{B} - B) R \|_{F} + \| U_B^{\top} (\hat{B} - B) U_B \|_{F} \\
      & \leq \mathcal{O}(n^{-1}) (\mathcal{O}(n) + \mathcal{O}(n)) + \mathcal{O}(\log n)  + \| U_B^{\top} (\hat{B} - B) U_B \|_{F} \\
      & = \mathcal{O}( \log n) + \| U_B^{\top} (\hat{B} - B) U_B \|_{F}.
     \end{array}
   \end{align*}
   Now consider the term $ U_B^{\top} (\hat{B} - B) U_B \in \mathbb{R}^{d \times d}$. If we denote $U_i$ be the $i$th column of $U_B$, then for each $i, j$th entry, we have 
   \begin{align*}
    ( U_B^{\top} (\hat{B} - B) U_B )_{ij} = U_i^{\top} (\hat{B} -B) U_j = \frac{1}{2} V_{i}^{\top} (\Delta^2 - D^2) V_j
   \end{align*}
   where $V = P U_B$.
    Furthermore, we have
    \begin{equation} \label{eq:sum}
    V_{i}^{\top} (\Delta^2 - D^2) V_j = \sum\limits_{k,l} V_{ik} ({\Delta_{kl}}^2 - {D_{kl}}^2) V_{jl}.
    \end{equation}
 Recall, since $X_k$'s are sub-Gaussian, thus equation (\ref{eq:sum}) is a sum of mean zero sub-exponential random variables. By Bernstein's inequality \citep{HDP}, we have
 $$ \mathbb{P}[ |\sum\limits_{k,l} ({\Delta_{kl}}^2 - {D_{kl}}^2) V_{ik} V_{jl} | > t] \leq  2 \exp \Bigl\{-C \min (\frac{t^2}{M^2 \sum_{k,l} {V_{ik}}^2{V_{kl}}^2}, \frac{t}{M \max_{k,l} ( V_{ik} V_{jl})}) \Bigr \} $$  
 where $M := \max_{k,l} \| {\Delta_{kl}}^2 - {D_{kl}}^2 \|_{\psi_1} $.
 Since $\sum_{k} {V_{ik}}^2 \leq 1 \forall i $, we have that each entry of $ U_B^{\top} (\hat{B} - B) U_B  \in \mathbb{R}^{d \times d}$ is $\mathcal{O}(\log n)$, and 
 \begin{equation}
 \label{important_bound}
 \| U_B^{\top} (\hat{B} - B) U_B \|_{F} = \mathcal{O} (\log n).
 \end{equation}
 This then gives $\|W^{*}S_{\hat{B}} - S_{B}W^{*}\|_{F} = \mathcal{O}(\log n)$, with high probability.\\
    Finally, consider $ \|W^{*} S_{\hat{B}}^{1/2} - S_{B}^{1/2} W^{*}\|_{F}.$ The $i, j$th entry of $W^{*} S_{\hat{B}}^{1/2} - S_{B}^{1/2} W^{*}$ is 
    \begin{align*}
    {{W^{*}}_{ij}} ( {\lambda_{j}}^{1/2}(\hat{B}) -  {\lambda_{i}}^{1/2}({B})) 
    = {{W^{*}}_{ij}} \frac{{\lambda_{j}}(\hat{B}) -  {\lambda_{i}}({B})} {{\lambda_{j}}^{1/2}(\hat{B}) +  {\lambda_{i}}^{1/2}({B})} 
     \leq {{W^{*}}_{ij}} \frac{{\lambda_{j}}(\hat{B}) -  {\lambda_{i}}({B})}{\Omega(\sqrt{n})} 
     = \mathcal{O}(n^{-\frac{1}{2}} \log  n),
    \end{align*}
    as desired (note in the last inequality, we used the first part of this Lemma.
\end{proof}
We now proceed to prove Lemma \ref{appthm4}.
\begin{proof}[Proof of Lemma~\ref{appthm4}]
  To show Eq.~\eqref{eq:1}, we have
  \begin{align*}
    \sqrt{n} \| (\hat{B} -B) U_B (W^{*} S_{\hat{B}}^{-1/2} - S_B^{-1/2} W^{*}) \|_F 
    & \leq \sqrt{n} \| (\hat{B} -B) U_B \| \times \| W^{*} S_{\hat{B}}^{-1/2} - S_B^{-1/2} W^{*} \|_F \\
    &  \leq  \sqrt{n} \| (\hat{B} -B) \| \times \| W^{*} S_{\hat{B}}^{-1/2} - S_B^{-1/2} W^{*} \|_F \\
    & = \sqrt{n} \mathcal{O}(\sqrt{n \log n}) \mathcal{O}(n^{-\frac{3}{2}} \log n) = \frac{C \log n \sqrt{\log n}}{\sqrt{n}}
  \end{align*}
which converges to $0$ as $n \rightarrow \infty$.

Let us now consider Eq.~\eqref{eq:2}. Recall that $X = U_B S_B^{1/2} W$ for some orthogonal matrix W, and since $X_i$'s are sub-Gaussian, $\|X_i\|$ is bounded by some constant $C$ with high probability, i.e., $\|X_i\| = \sqrt{\sum\limits_{j=1}^d \sigma_j {{U_B}_{ij}}^2} \leq C$ with high probability, where $\sigma_i$'s are the diagonal entries of $S_B^{1/2}$. Note that 
$\sigma_i = \Omega(n) \geq C^{'}n$ for all $i$ and some constant $C^{'}$. We thus obtain
$\sqrt{\sum_{j = 1}^{d} {{U_B}_{ij}}^2} \leq \frac{C}{\sqrt{n}}$,  i.e., $||U_B||_{2 {\to} \infty} \leq \frac{C}{\sqrt{n}}.$
Hence, 
\begin{equation*}
\begin{split}
\| [ U_B U_B^{\top} (\hat{B} - B) U_B W^{*} S_{\hat{B}}^{-1/2}]_{h} \| & \leq \|U_B\|_{2 {\to} \infty} \| U_B^{\top} (\hat{B} - B) U_B \| \times \|S_{\hat{B}}^{-1/2}\| \\
& \leq \frac{C}{\sqrt{n}} \mathcal{O}( \log n) \mathcal{O}(n^{-\frac{1}{2}} ) \leq \frac{C \log n}{n} 
\end{split}
\end{equation*}
which also converges to $0$ as $n \rightarrow \infty$ (note in the last inequality we used \ref{important_bound}).
  
To show Eq.~\eqref{eq:3}, we must bound $\|[ (I -U_B U_B^{\top}) (\hat{B} - B) (\hat{U}_B - U_B W^{*}) S_{\hat{B}}^{-1/2}]_{h}\|$.
Define 
\begin{gather*} G_1 =  (I -U_B U_B^{\top}) (\hat{B} - B) (I -U_B U_B^{\top}) U_{\hat{B}} S_{\hat{B}}^{-1/2}, \\
 G_2 =  (I -U_B U_B^{\top}) (\hat{B} - B) U_B ( U_B^{\top} U_{\hat{B}} - W^{*})S_{\hat{B}}^{-1/2} 
\end{gather*}
Note that $(I -U_B U_B^{\top}) (\hat{B} - B) (\hat{U}_B - U_B W^{*}) {S_{\hat{B}}}^{-1/2} = G_1 + G_2.$ 
We now only need to bound the $h$th row of $G_1$ and $G_2$.
\begin{align*}
    \| G_2 \|_F & \leq \|(I -U_B U_B^{\top}) (\hat{B} - B) U_B \| \times \| U_B^{\top} U_{\hat{B}} - W^{*} \|_F \times \| {S_{\hat{B}}}^{-\frac{1}{2}} \| \\
    & \leq \|(I -U_B U_B^{\top}) \| \times \|\hat{B} - B\| \times \| U_B^{\top} U_{\hat{B}} - W^{*} \|_F \times \| {S_{\hat{B}}}^{-\frac{1}{2}} \| \\
    & = \mathcal{O}(1) \mathcal{O}(\sqrt{n \log n} )  \mathcal{O}(n^{-1} ) \mathcal{O}(n^{-\frac{1}{2}} )  = \mathcal{O}(\frac{\sqrt{\log n}}{n}) 
\end{align*}
Thus $\|\sqrt{n} G_2 \|_F$ converges to $0$ as $n \rightarrow \infty.$
We now consider the rows of $G_1$. Note that $U_{\hat{B}}^{\top} U_{\hat{B}} = I $ and hence
\begin{align*}
    \|(G_1)_h\| & = \| [ (I -U_B U_B^{\top}) (\hat{B} - B) (I -U_B U_B^{\top}) U_{\hat{B}} S_{\hat{B}}^{-1/2} ]_h \| \\
    & = \| [ (I -U_B U_B^{\top}) (\hat{B} - B) (I -U_B U_B^{\top}) U_{\hat{B}} U_{\hat{B}}^{\top} U_{\hat{B}} S_{\hat{B}}^{-1/2} ]_h \| \\
    & = \| U_{\hat{B}} S_{\hat{B}}^{-1/2} \| \times \| [ (I -U_B U_B^{\top}) (\hat{B} - B) (I -U_B U_B^{\top}) U_{\hat{B}} U_{\hat{B}}^{\top} ]_h\| \\
    & \leq \frac{C}{\sqrt{n}} \| [ (I -U_B U_B^{\top}) (\hat{B} - B) (I -U_B U_B^{\top}) U_{\hat{B}} U_{\hat{B}}^{\top} ]_h\|
\end{align*}
Define $$ H_1 = (I -U_B U_B^{\top}) (\hat{B} - B) (I -U_B U_B^{\top}) U_{\hat{B}} U_{\hat{B}}^{\top}.$$ Since the $Z_i$ are i.i.d., the rows of $H_1$ are exchangeable and hence, for any fixed index $h$, $n \mathbb{E} \|(H_1)_h \|^2 = \mathbb{E}[\|H_1\|_F^2]$. Markov's inequality then implies
  \begin{align*}
    \mathbb{P} [ \|\sqrt{n} (H_1)_h \| > t] & \leq \frac{n \mathbb{E} {\| [(I -U_B U_B^{\top}) (\hat{B} - B) (I -U_B U_B^{\top}) U_{\hat{B}} U_{\hat{B}}^\top)_h]\|}^2 }{t^2} \\
    & = \frac{\mathbb{E}\bigl(\| (I -U_B U_B^{\top}) (\hat{B} - B) (I -U_B U_B^{\top}) U_{\hat{B}} U_{\hat{B}}^{\top} \|_F^2\bigr)}{t^2}
  \end{align*}
  Furthermore,
  $$\| (I -U_B U_B^{\top}) (\hat{B} - B) (I -U_B U_B^{\top}) U_{\hat{B}} U_{\hat{B}}^{\top} \|_F \leq \| \hat{B} - B \| \times \|U_{\hat{B}} -U_B U_B^{\top} U_{\hat{B}} \|_F $$
  We now recall the following two observations
  \begin{itemize}
    \item The optimization problem $\min_{T \in \mathbb{R}^{d \times d}} {\| U_{\hat{B}} - U_B T\|_F}^2$ is solved by $T = U_B^{\top} U_{\hat{B}}.$
    \item By theorem 2 of \cite{vD-K}, there exists $W \in \mathbb{R}^{d \times d}$ orthogonal, such that $\| U_{\hat{B}} - U_B W\|_F \leq  
    C \| {U_{\hat{B}}} U_{\hat{B}}^{\top} - U_B U_B^{\top} \|_F.$
  \end{itemize}
  Combining the two facts above, we conclude that
   ${\| U_{\hat{B}} - U_B U_B^{\top} U_{\hat{B}} \|_F}^2  \leq \frac{C}{n}$ with high probability, as in Lemma \ref{appthm9}, hence 
  $$\| (I -U_B U_B^{\top}) (\hat{B} - B) (I -U_B U_B^{\top}) U_{\hat{B}} U_{\hat{B}}^{\top} \|_F \leq \mathcal{O}(\sqrt{n \log n}) \frac{C}{\sqrt{n}} = \mathcal{O}(\sqrt{\log n}),$$
  with high probability. Therefore,
  $$\mathbb{P} (\|\sqrt{n} (H_1)_h \| > t) \leq \frac{\sqrt{\log n}}{t^2}.$$
  picking $t=n^{\frac{1}{4}}$, we get $\lim_{n \rightarrow \infty} Cn^{-1/2} \|\sqrt{n} (H_1)_h\| = 0.$

Finally, Eq.~\eqref{eq:4} and Eq.~\eqref{eq:5} follow from Lemma \ref{appthm8} and Lemma~\ref{appthm9} and the bound $\| U_B\|_{2 \to \infty} \leq C n^{-1/2}$.
\end{proof}

\bibliography{Bib}
\end{document}